\newtheorem{example}{Example}
\renewcommand\thmcontinues[1]{Continued}
\newenvironment{proof}{\paragraph{Proof:}}{\hfill$\square$}
\newcommand{\EQ}[1]{\begin{align}\begin{split} #1
\end{split}\end{align}}
\newcommand{\eq}[1]{\begin{equation}{ #1 
}\end{equation}}
\newcommand\blfootnote[1]{%
\begingroup 
\renewcommand\thefootnote{}\footnote{#1}%
\addtocounter{footnote}{-1}%
\endgroup 
}
\begin{document}

\title{Counting Bethe States in Twisted Spin Chains}

\author[1]{Hongfei Shu,}
\author[2]{Peng Zhao,}
\author[3,4]{Rui-Dong Zhu,}
\author[5,6]{Hao Zou\blfootnote{*The authors are ordered alphabetically and should all be viewed as co-first authors. }}
\affiliation[1]{Institute for Astrophysics, School of Physics,
Zhengzhou University, Zhengzhou,\\ Henan 450001, China}
\affiliation[2]{Joint School of the National University of Singapore and Tianjin University,\\
International Campus of Tianjin University, Fuzhou, 350207, China}
\affiliation[3]{Institute for Advanced Study \& School of Physical Science and Technology,\\ Soochow University, Suzhou 215006, China}
\affiliation[4]{Jiangsu Key Laboratory of Frontier Material Physics and Devices,\\ Soochow University, Suzhou 215006, China}
\affiliation[5]{Center for Mathematics and Interdisciplinary Sciences, Fudan University, Shanghai 200433, China}
\affiliation[6]{Shanghai Institute for Mathematics and Interdisciplinary Sciences, Shanghai 200433, China}

\emailAdd{shu@zzu.edu.cn, pzhao@tjufz.org.cn, rdzhu@suda.edu.cn, hzou@simis.cn}
\abstract{
We present a counting formula that relates the number of physical Bethe states of integrable models with a twisted boundary condition to the number of states in the untwisted or partially twisted limit.}

\allowdisplaybreaks

\maketitle


\section{Introduction}

One of the oldest problems in quantum many-body systems is the composition of many spins, which attracts continued interest to this day (see \cite{Kulish_2012, Curtright:2016eni, Gyamfi_2018, postnova2020multiplicities, Polychronakos:2023yhq} for some recent works in the mathematics and physics literature). The number of configurations of $L$ electrons with $M$ down spins is given by ${L \choose M}$. The number of highest-weight states of $SU(2)$ with a total $z$ spin of $S^z = L/2 - M$ is given by the \emph{difference} 
\eq{
\mu_{(L-M, M)} = {L \choose M} - {L \choose M-1} \,.
\label{dc}
}
This formula, first computed by Heitler \cite{Heitler}, played an important role in Heisenberg's model of a one-dimensional chain of electrons with the exchange interaction \cite{Heisenberg}. It turns out that an individual spin state is not an eigenstate of the nearest-neighbor Hamiltonian. Bethe's celebrated solution of this problem postulates the eigenstates in terms of a superposition of state vectors whose partial waves satisfy a certain quantization condition, the so-called Bethe-ansatz equations (BAEs) \cite{Bethe:1931hc}. Physical solutions parametrize the highest-weight states. The complete $2^L$-dimensional Hilbert space is spanned by the physical Bethe states and their descendants obtained by the action of the spin-lowering operator.  In mathematical terms, the number of physical Bethe states is the multiplicity of the decomposition of the $L$-fold tensor power of the fundamental representation of $\mathfrak{su}(2)$ into irreducible representations. The term subtracted in \eqref{dc} counts the descendants in the highest-weight module. The form depends only on the symmetry and not on the specific representation. For example, the formula for higher spins takes the same form as $s=1/2$ \cite[Corollary~3]{Kirillov1985}
\eq{
\mu_{s,L}(M) = c_{s,L}(M) - c_{s,L}(M-1) \,,
}
where $c_{s,L}(M)$ generalizes the binomial coefficient.

While most works on spin chains have focused on the spin-1/2 representation of $\mathfrak{su}(2)$ due to the close links to experiments, 
the Bethe ansatz was extended to multicomponent models with extended symmetries by Yang and Sutherland \cite{Yang:1967ue, Sutherland:1975vr} and in models of higher spin by Faddeev and collaborators (\cite{Faddeev:1996iy} provides a standard review). One of the great achievements of the algebraic Bethe ansatz was the construction of an integrable model with a higher spin. 

The global symmetry may be broken by turning on a transverse magnetic flux through the periodic chain, which is equivalent to a twisted boundary condition \cite{Byers:1961zz}. The states will no longer be organized into the highest-weight modules of the original symmetry group. There are several advantages in considering the twisted model. First, not all solutions to the BAEs correspond to eigenstates of the Hamiltonian. These unphysical states must be carefully removed by imposing additional conditions. It is empirically found that such unphysical solutions disappear in the twisted model and all solutions are physical. It is conjectured from numerical studies that unphysical solutions do not appear in twisted spin chains (cf. \cite{Bazhanov:2010ts, Hou:2023ndn}). In higher rank, one may consider partially twisting the model. The symmetry is broken to a subgroup and we expect a similar difference formula to hold for the branching coefficients. It is somewhat surprising that a formula generalizing \eqref{dc} to higher rank has not appeared explicitly in the literature (see \cite{Kirillov1985, Kirillov_1987} for a different combinatorial approach to the problem). 

In this paper, we derive a formula that expresses the tensor-product multiplicities of the composition of $L$ spins in the $2s$-symmetric representation of $SU(r+1)$ as an alternating sum of the number of spin configurations. It connects the number of physical solutions in the untwisted spin chain with the number of solutions in the twisted chain. Our formula follows from the idea that some of the physical states in the twisted case become descendants of the highest-weight states. The completeness of the Hilbert space is guaranteed, since the physical solutions that span the Hilbert space recombine into the highest-weight modules in the untwisted limit. Our results are purely obtained from the representation theory. 

Our work is motivated in part by high-energy physics. The Bethe/Gauge correspondence connects the Bethe states with the vacua of supersymmetric gauge theories \cite{Nekrasov:2009uh,Nekrasov:2009ui}. In the dual gauge theory, turning on the twisted boundary condition corresponds to turning on a non-zero Fayet-Iliopoulos parameter and naturally avoiding singularities. Yet not much is known about the properties on the phase boundaries. The number of physical solutions of the twisted spin chain is the number of supersymmetric vacua, and it can be counted by the Witten index \cite{Shu:2022vpk}, which is also interpreted as solving a 3d restricted-occupancy problem as explained in Section~\ref{s:counting}. However, in the untwisted limit when the gauge theory approaches the singularities of its moduli space, the Witten index will not be well defined anymore \cite{Witten:1993yc}. If we extend the Bethe/Gauge correspondence and trust that the untwisted Bethe equations still describe the vacua in the gauge theory, then we learn about properties of the singularity that are inaccessible using traditional methods in quantum field theory. 

This paper is organized as follows. In Section \ref{s:ABA}, we review the algebraic Bethe ansatz and analyze the symmetries preserved by partial twists. In Section \ref{s:counting}, we first formulate the counting of spin configurations combinatorially as a restricted-occupancy problem. We then present the counting formula that relates the tensor-product multiplicities to the restricted-occupancy coefficients. The completeness of the Hilbert space will also be discussed at the end of this section. In Section \ref{s:example},  we study partial twists and present the counting formula for the branching coefficients. In Section \ref{s:gen}, we study generalizations to Kondo-type models and models with Lie superalgebras. 

\section{Algebraic Bethe ansatz with a twist}\label{s:ABA}

\subsection{XXX spin chain}
The Hamiltonian of the XXX spin chain is embedded in one of the conserved charges generated by the transfer matrix, constructed as follows. First, introduce an auxiliary space $V_0 = \mathbb{C}^2$ in addition to the physical space $V_n = \mathbb{C}^{2s+1}$ ($n=1, \ldots, L$) of the spins.
The $R$ matrix $R_{00'}(u)\in{\rm End}(\mathbb{C}^2\otimes \mathbb{C}^2)$ and the Lax matrix $\mathcal{L}_{0n}(u)\in{\rm End}(\mathbb{C}^2\otimes V_n)$ are given by 
\eq{
    R_{00'}(u)=\lt(\begin{array}{cccc}
     u+i & 0 & 0 & 0\\
    0 &  u & i & 0\\
    0 & i &  u & 0\\
    0 & 0 & 0 &  u+i\\
    \end{array}\rt), \quad \mathcal{L}_{0n}(u)=\lt(\begin{array}{cc}
    \lt( u+\frac{i}{2}\rt)\mathbb{I}+iJ_n^z & iJ_n^-\\
    iJ_n^+ & \lt( u+\frac{i}{2}\rt)\mathbb{I}-iJ_n^z\\
    \end{array}
    \rt)\,,
}
with the $\mathfrak{su}(2)$ generators satisfying 
\EQ{
    \lt[J_n^+,J_n^-\rt]=2J_n^z\,,\quad \lt[J_n^z,J_n^\pm\rt]=\pm J_n^\pm\,,
}
taken in the spin-$s$ representation. They satisfy the celebrated RLL relation:
\eq{
    R_{00'}( u- u')\mathcal{L}_{0n}(u)\mathcal{L}_{0'n}(u')=\mathcal{L}_{0'n}(u')\mathcal{L}_{0n}(u)R_{00'}(u- u')\,.
\label{RLL}
}
We define the monodromy matrix ${\bf T}( u)\in {\rm End}(\bigotimes_{i=0}^LV_i)$ as a product of the Lax matrices
\EQ{
    {\bf T}( u)&:= \mathcal{L}_{0L}( u)\cdots \mathcal{L}_{02}( u)\mathcal{L}_{01}( u)\\
    &=\lt(\begin{array}{cc}
    {\cal A}( u) & {\cal B}( u)\\
    {\cal C}( u) & {\cal D}( u)\\
    \end{array}
    \rt) \,,
\label{def-transfer}
}
where ${\cal A},{\cal B}, {\cal C}, {\cal D}\in {\rm End}(\bigotimes_{i=1}^LV_i)$ are operators defined on the physical space. The RLL relation implies an analogous RTT relation on the monodromy matrices:
\eq{
 R_{00'}( u- u'){\bf T}_{0}(u){\bf T}_{0'}(u')={\bf T}_{0'}(u'){\bf T}_{0}(u)R_{00'}(u- u')\,.
 \label{RTT}
}
By taking the trace over the auxiliary spaces labeled by $0$ and $0'$, we may show that the transfer matrix, defined as 
\eq{
t( u):= \rm{tr}_0 {\bf T}( u)\,,
} 
commutes at different values of spectral parameters $u$.

To diagonalize $t(u)$, one starts with the ground state satisfying the condition
\eq{
    {\cal C}( u)\ket{\Omega}=0 \,,
}
for all $ u\in\mathbb{C}$ and constructs the eigenstates, known as the Bethe states: 
\eq{
\ket{\Psi(\vec  u)} :=  {\cal B}( u_1){\cal B}( u_2)\cdots{\cal B}( u_M)\ket{\Omega}\,.\label{ABA-A1}
}
The quantization condition on the spectral parameters $ u_i$, known as the Bethe-ansatz equations, ensures that the Bethe states are eigenstates. We define the generators of a global $SU(2)$ symmetry as 
\eq{
    S^\pm:= \sum_{n=1}^LJ_n^{\pm}\,,\quad S^z:= \sum_{n=1}^LJ^{z}_n\,.
}
The following statements will be essential for our discussion:
\begin{enumerate}
    \item The transfer matrix is invariant under the global $SU(2)$ symmetry.
    \item The Bethe states are the highest-weight states of the global $SU(2)$ symmetry.
\end{enumerate}

Now we review the technical details to establish the above statements. One may read off from the RTT relation \eqref{RTT} that
\EQ{
&( u_1- u_2+i){\cal B}( u_1){\cal A}( u_2)=( u_1- u_2){\cal A}( u_2){\cal B}( u_1)+i{\cal B}( u_2){\cal A}( u_1)\,,
\\
    &( u_1- u_2+i){\cal B}( u_2){\cal D}( u_1)=( u_1- u_2){\cal D}( u_1){\cal B}( u_2)+i{\cal B}( u_1){\cal D}( u_2)\,,
\\
    &( u_1- u_2+i){\cal C}( u_2){\cal A}( u_1)=( u_1- u_2){\cal A}( u_1){\cal C}( u_2)+i{\cal C}( u_1){\cal A}( u_2)\,,
\\
    &( u_1- u_2+i){\cal C}( u_1){\cal D}( u_2)=( u_1- u_2){\cal D}( u_2){\cal C}( u_1)+i{\cal C}( u_2){\cal D}( u_1)\,.\label{ABCD}
}
One may show from the large-$ u$ expansion of \eqref{def-transfer} that
\eq{
    {\cal A}( u)\sim  u^L\bigotimes_{i=1}^L\mathbb{I}_{i}\,,\quad 
    {\cal B}( u)\sim i u^{L-1}S^-\,,\quad 
    {\cal C}( u)\sim i u^{L-1}S^+\,, \quad 
    {\cal D}( u)\sim  u^L\bigotimes_{i=1}^L\mathbb{I}_{i}\,.
\label{large}
}
By taking the large-$ u$ limit of one of the spectral parameters in \eqref{ABCD}, we obtain 
\EQ{
    &\lt[{\cal A}( u),S^+\rt]= -\lt[{\cal D}( u),S^+\rt]={\cal C}( u)\,, \\
    &\lt[{\cal A}( u),S^-\rt]=-\lt[{\cal D}( u),S^-\rt]=-{\cal B}( u)\,.
}
By further using the identity 
\eq{
    ( u_1- u_2)\lt[{\cal B}( u_1),{\cal C}( u_2)\rt]=i\lt({\cal D}( u_2){\cal A}( u_1)-{\cal D}( u_1){\cal A}( u_2)\rt)\,,
}
which follows again from the RTT relation, we obtain 
\eq{
    \lt[{\cal B}( u),S^+\rt]=-\lt[{\cal C}( u),S^-\rt]={\cal D}( u)-{\cal A}( u)\,.
    \label{BC}
}
One can then compute the commutator between $S^z$ and ${\cal A}$, ${\cal D}$ with the Jacobi identity to obtain 
\eq{
    \lt[{\cal A}( u),S^z\rt]=\frac{1}{2}\lt[{\cal A}( u),\lt[S^+,S^-\rt]\rt]=0\,,\quad \lt[{\cal D}( u),S^z\rt]=\frac{1}{2}\lt[{\cal D}( u),\lt[S^+,S^-\rt]\rt]=0\,.
}
Thus the transfer matrix $t( u) = {\cal A}( u) + {\cal D}( u)$ commutes with all the $\mathfrak{su}(2)$ generators.

To demonstrate that the Bethe states are eigenstates of the transfer matrix, we commute ${\cal A}$ and ${\cal D}$ past ${\cal B}$ using \eqref{ABCD} and require the vanishing of the extra terms. Finally, one may show that the Bethe states are highest-weight states, that is $S^+ \ket{\Psi(\vec  u)}=0$, by applying \eqref{BC} and removing the extra terms using the BAEs. By \eqref{large}, a descendant may be obtained from a Bethe state by sending one or more rapidity parameters to infinity.

To include a twisted boundary condition, we insert a phase operator into the transfer matrix 
\eq{
    t_{\theta}( u):= \rm{tr}_0\Big[{\rm Ph}(\theta){\bf T}( u)\Big]\,,
}
where ${\rm Ph}(\theta)\in{\rm End}(\mathbb{C}^2)$ defined on the auxiliary space $V_0=\mathbb{C}^2$ is
\eq{
    {\rm Ph}(\theta)=\lt(\begin{array}{cc}
    e^{i\theta/2} & 0\\
    0 & e^{-i\theta/2}\\
    \end{array}\rt)\,.
}
One of the remarkable achievements of the algebraic Bethe ansatz is the generalization to an arbitrary spin $s$. Although the Hamiltonian takes a complicated form, the BAEs is a straightforward generalization of the $s=1/2$ case. We may consider the spin-$s_i$ representation of $\mathfrak{su}(2)$ at the $i$-th site of the spin chain, and it is equivalent to taking the $(2s_i+1)$-dimensional representation of $J^{\pm}_i$ and $J_{i}^z$ and to set $V_i=\mathbb{C}^{2s_i+1}$. 
It is then easy to follow the algebraic Bethe ansatz to see that the representation-theoretic information of the spin chain is only contained in the actions of ${\cal A}$ and ${\cal D}$ on the ground state. 
\EQ{
    {\cal A}( u)\ket{\Omega}=\prod_{l=1}^L\lt(\lt( u+\frac{i}{2}\rt)+iJ_{l}^z\rt)\ket{\Omega},\\
    {\cal D}( u)\ket{\Omega}=\prod_{l=1}^L\lt(\lt( u+\frac{i}{2}\rt)-iJ_{l}^z\rt)\ket{\Omega}.
}
The BAEs with a different spin at each site and a twist angle can then be found as 
\eq{
    e^{i\theta}\prod_{n=1}^L\frac{v_j+is_n}{v_j-is_n}=-\prod_{k=1}^M\frac{v_j-v_k+i}{v_j-v_k-i}\,,\label{BAE-A1}
}
where we put $v_j= u_j+\frac{i}{2}$. In this paper, we focus on the case where all the $s_n$ are equal to $s$, but the counting formula does not depend on this assumption and can be further generalized to spin chains with an inhomogeneity at each site. 

\subsection{Singular solutions and physical conditions}
While the number of physical Bethe states is determined by symmetry, not all solutions to the BAEs correspond to the physical states. There are solutions that either give divergent eigenvalues or zero eigenstates. For example, when $s=1/2$, the energy eigenvalue is given by \cite{Faddeev:1996iy}
\eq{\label{eq:BAE-spe}
    E =\sum_{j=1}^M\frac{1}{v^2_j+\frac{1}{4}}-L\,.
}
The Bethe roots at $v_j=\pm i/2$ give divergent eigenvalues, and solutions containing such Bethe roots generically cannot be physical. A cure to regularize such singular solutions is to turn on a twist angle.
If a pair of Bethe roots, say $v_1$ and $v_2$, take the form \cite{Nepomechie:2013mua}
\eq{
    v_1=\frac{i}{2}+\epsilon+c_1\epsilon^L+{\cal O}(\epsilon^{L+1})\,,\quad v_2=-\frac{i}{2}+\epsilon+c_2\epsilon^L+{\cal O}(\epsilon^{L+1})\,,\label{root-reg}
}
with some regularization parameter $\epsilon$ that goes to zero when $\theta\rightarrow 0$, then a finite value of the energy can be obtained. The physical condition for such singular solutions can be derived by substituting \eqref{root-reg} into the BAEs and eliminating $c_1-c_2$, 
\EQ{
    &e^{2i\theta}(-1)^L\prod_{j=3}^{M}\frac{(v_j+\frac{i}{2})(v_j+\frac{3i}{2})}{(v_j-\frac{i}{2})(v_j-\frac{3i}{2})}=1\,,\\
    &e^{i\theta}\frac{(v_j+\frac{i}{2})^{L-1}}{(v_j+\frac{i}{2})^{L-1}}\frac{(v_j-\frac{3i}{2})}{(v_j+\frac{3i}{2})}\prod_{k=3,k\neq j}^M\frac{v_j-v_k-i}{v_j-v_k+i}=1\,,\quad j=3,\cdots,M\,.
}
For a spin-$s$ chain, similar physical conditions for singular solutions may be derived \cite{Nepomechie:2014hma}. 

\subsection{Higher-rank spin chains and nested Bethe ansatz}\label{s:rank2-NBA}

We briefly describe how the algebraic Bethe ansatz is modified in higher rank \cite{Kulish:1979cr, Kulish:1983rd}. The following $R$ matrix of dimension $(r+1)^2$ defined by 
\eq{
    R( u)= u\mathbb{I}+i\mathbb{P}\,,
}
satisfies the Yang-Baxter equation, where 
\eq{
    \mathbb{P}=\sum_{i,j=1}^{r+1} E_{ij}\otimes E_{ji}\,,\quad (E_{ij})_{kl}=\delta_{i,k}\delta_{j,l}\,,
}
plays the role of the permutation operator on the vector space $\mathbb{C}^{r+1}\otimes \mathbb{C}^{r+1}$.  The Lax matrix may be written as
\eq{\mathcal{L}( u) =  u + i \sum_{i,j =1}^{r+1} E_{ij} X^{ji}\,,}
where $X^{ji}$ are the generators of $\mathfrak{sl}(r+1)$, which can be taken to any representation to construct different spin chains. As in the rank-1 case,  the monodromy matrix, defined as \eqref{def-transfer}, 
satisfies the RTT relations \eqref{RTT}.  From the RTT relation, we can find the commutation relations satisfied by the operators in the components of ${\bf T}$.
Now ${\cal D}$ is an $r\times r$ block, while ${\cal B}$ (${\cal C}$) a row (column) with $r$ components. The commutation relations are almost the same as \eqref{ABCD}, with ${\cal B}, {\cal C}$ and ${\cal D}$ replaced by ${\cal B}_i, {\cal C}_i$ and ${\cal D}_{ij}$, respectively.  We then introduce the twisted transfer matrix
\eq{
    t_{\theta}( u):= {\rm tr}_0\,\Big[{\rm Ph}(\vec{\theta}){\bf T}( u)\Big]\,,
}
with 
\eq{
    {\rm Ph}(\vec \theta)={\rm diag}(e^{i\phi_{1}},e^{i\phi_{2}}, \cdots, e^{i\phi_{r+1}})\,, \qquad \theta_i = \phi_{i} - \phi_{i+1}\,,
}    
where $\sum^{r+1}_{i=1} \phi_i = 0$.

The main task of the algebraic Bethe ansatz is to solve the eigenvalue problem
\begin{equation}\label{eq:eigen}
    t_\theta(u)\ket{\Psi}=\left[e^{i\phi_1}{\cal A}(u)+\sum_{i=2}^{r+1}\,e^{i\phi_i}{\cal D}_{ii}(u)\right]\ket{\Psi}=\Lambda\ket{\Psi},
\end{equation}
which can be achieved by the following recursive process\footnote{We would like to express our gratitude to Prof. Wenli Yang for helping to clarify our confusion regarding the nested Bethe ansatz.}.
First we focus on the case $r=2$, where we introduce the notations 
\begin{equation}
    {\cal B}_1:=(t_\theta)_{12},\quad {\cal B}_2:=(t_\theta)_{13},\quad {\cal B}^{(2)}:=(t_\theta)_{23},\quad \vec{\cal B}:=({\cal B}_1,{\cal B}_2).\label{nesting-2}
\end{equation}
We remark that ${\cal B}_1$, ${\cal B}_2$ and ${\cal B}^{(2)}$ are lowering operators in $\mathfrak{su}(3)$ with the same weight as the roots $-\alpha_1$, $-\alpha_1-\alpha_2$, $-\alpha_2$. As the initial step, the ansatz for the eigenvector is proposed as
\begin{align}
    &\ket{\Psi}=\lt[\lt(\vec{\cal B}_{1}(u_{1}^{(1)})\otimes\cdots\otimes\vec{\cal B}_{M_1}(u_{M_{1}}^{(1)})\rt)\cdot \vec{V}_{M_1}(\{u^{(2)}_i\}_{i=1}^{M_2})\rt]\ket{\Omega_3},\\
    &\vec{V}_{M_1}(\{u^{(2)}_i\}_{i=1}^{M_2}):=\prod_{i=1}^{M_2}{\cal B}^{(2)}(u^{(2)}_i)\ket{\uparrow}_{1}\otimes \dots \otimes \ket{\uparrow}_{M_1}\in \mathbb{C}^2_{1}\otimes \mathbb{C}^2_{2}\otimes \dots \otimes \mathbb{C}^2_{M_1},
\end{align}
where $M_i$ is the number of magnons, $\ket{\Omega_3}$ represents the highest-weight states in the physical Hilbert space defined by ${\cal C}_i\ket{\Omega_3}=0$, $\ket{\uparrow}=\binom{1}{0}\in \mathbb{C}^2$, $\vec{\cal B}_{i}$ denotes the operator-valued row vector $\vec{\cal B}$ acting on the $i$-th virtual vector space $\mathbb{C}^2_{i}$, and $\vec{V}_{M_{1}}$ is a column vector contracted with $\vec{\cal B}_{i}$'s. 

The above ansatz can be easily generalized to arbitrary rank $r$ of A-type Lie algebra. For example in the case of $\mathfrak{su}(4)$, note that $\ket{\Omega_3}$ for the spin chain in the fundamental representation is given by
\begin{equation}
\ket{\Omega_3}=\bigotimes_{i=1}^L\ket{1},\quad \ket{1}:=\lt(\begin{array}{c}
     1\\
     0\\
     0\\
\end{array}\rt), 
\end{equation}
it can serve as a virtual vector (by replacing $L$ by a new magnon number) to be contracted with three additional lowering operators. More explicitly, the Bethe ansatz for $\mathfrak{su}(4)$ can be written as 
\begin{align}
    \ket{\Psi}=\vec{B}^{(1)}_{1}(u^{(1)}_1)\otimes \dots\otimes \vec{B}^{(1)}_{M_1}(u^{(1)}_{M_1})\cdot\Bigg\{\vec{B}^{(2)}_{1}(u^{(2)}_1)\otimes \dots\otimes \vec{B}^{(2)}_{M_2}(u^{(2)}_{M_2})\cr
    \cdot\lt(\prod_{i=1}^{M_3}{\cal B}^{(3)}(u^{(3)}_i)\bigotimes_{j=1}^{M_2}\ket{\uparrow}_{j}\rt)\otimes \bigotimes_{k=1}^{M_1}\ket{1}_{k}\Bigg\}\ket{\Omega_4},
\end{align}
where $\ket{\Omega_4}$ is the state in the physical Hilbert space annihilated by all raising operators $(t_\theta)_{ij}$ for $i>j$, and 
\begin{equation}
    \vec{\cal B}^{(1)}:=((t_\theta)_{12},(t_\theta)_{13},(t_\theta)_{14}),\quad \vec{\cal B}^{(2)}:=((t_\theta)_{23},(t_\theta)_{24}),\quad {\cal B}^{(3)}:=(t_\theta)_{34}.
\end{equation}

Using the commutation relations of the operators, we can compute the actions of ${\cal A}$ and ${\cal D}_{ii}$ on $\ket{\Psi}$ until they act directly on the ground state $\ket{\Omega}_{r+1}$, where ${\cal A}$ and ${\cal D}_{ii}$ become diagonal. However, we will encounter some ``unwanted terms'' on the right-hand side of \eqref{eq:eigen}. To kill these terms, we should impose the first BAE. In the meantime, we should also solve the eigenvalue problem for $\sum_{i=2}^{r+1}e^{i\phi_i}{\cal D}_{ii}$, which can be achieved by repeating a similar procedure as for \eqref{eq:eigen} by splitting the operator to be diagonalized into ${\cal D}_{22}$ and $\sum_{i=3}^{r+1}e^{i\phi_i}{\cal D}_{ii}$. Repeating this procedure, we eventually obtain $r$ BAEs for $\mathfrak{su}(r+1)$ spin chain in the $2s$-th symmetric representation of the first fundamental representation (which will also be referred to as the spin-$s$ representation\footnote{In the nesting process, which realizes the embedding $\mathfrak{gl}_{2}\subseteq\mathfrak{gl}_{3}\subseteq \dots \mathfrak{gl}_{r}\subseteq \mathfrak{gl}_{r+1}$ in the monodromy matrix ${\bf T}$ (see e.g. \cite[(3.6)]{Slavnov:2019hdn}), the $2s$-th symmetric representation considered here can be viewed as the spin-$s$ representation of $\mathfrak{gl}_{2}$. This is why we alternatively refer to this representation as the spin-$s$ representation of $\mathfrak{gl}_{r+1}$ in our context. } later in this article), 
\EQ{
&-e^{i\theta_1}\frac{(v^{(1)}_j+is)^LQ_1(v^{(1)}_j-i)Q_{2}(v^{(1)}_j+i/2)}{(v^{(1)}_j-is)^LQ_1(v^{(1)}_j+i)Q_2(v^{(1)}_j-i/2)}=1\,,\\
&-e^{i\theta_2}\frac{Q_1(v^{(2)}_j+i/2)Q_2(v^{(2)}_j-i)Q_3(v^{(2)}_j+i/2)}{Q_1(v^{(2)}_j-i/2)Q_2(v^{(2)}_j+i)Q_2(v^{(2)}_j-i/2)}=1 \,,\\
&\qquad\qquad\qquad\qquad\qquad\vdots\\
&-e^{i\theta_r}\frac{Q_{r-1}(v^{(1)}_j+i/2)Q_r(v^{(1)}_j-i)}{Q_{r-1}(v^{(1)}_j-i/2)Q_r(v^{(1)}_j+i)}=1\,, 
\label{BAE-A2}
}
where $v_j^{(i)}= u_j^{(i)}+\frac{i}{2}$ and
\eq{
    Q_i(v):= \prod_{k=1}^{M_i}(v- v^{(i)}_k)\,.
}

Each Bethe vector is a highest-weight state with the Dynkin labels \cite[page L595]{Kulish:1983rd}
\eq{(2sL-2M_1+M_2, \cdots, M_{i-1} - 2M_{i} + M_{i+1}, \cdots, M_{r-1} -2M_r)
\label{weight}
\,.}
The total dimension of the Hilbert space of such an $L$-site model is given by \cite[Section~6]{Kirillov1985}
\eq{
    \dim H=\lt(\begin{array}{c}
    2s+r\\
    r\\
    \end{array}
    \rt)^L.\label{eq:dim-Hilb}
}

This concludes the brief review of the algebraic Bethe ansatz.

\subsection{Symmetries of the twisted chain}
\label{sec:symmetries}

In this paper, all the twist angles are taken modulo $2\pi \mathbb{Z}$. For the XXX spin chain, we see that after introducing a twist angle, the transfer matrix $t_\theta$ commutes only with the $S^z$ generator. The $SU(2)$ symmetry is broken to $U(1)$. In other words, when $\theta = 0$, the symmetry is enhanced and the Bethe states organize into highest-weight representations of $\mathfrak{su}(2)$. An interesting feature in higher rank is that the model may be partially twisted where only a subgroup of the non-Abelian symmetry remains. 
Therefore, the Bethe states are restricted to the irreducible representations of the associated subalgebra.

Substituting the $R$ matrix into the RTT relation \eqref{RTT}, one finds
\eq{
   \begin{aligned}
   &\Big[{\bf T}_{0}(u),{\bf T}_{0'}(u')\Big] = \frac{i}{u- u'}\Big\{{\bf T}_{0'}(u'){\bf T}_{0}(u)\mathbb{P}_{00'} 
   -\mathbb{P}_{00'}
   {\bf T}_{0}(u){\bf T}_{0'}(u')\Big\} \,.
   \end{aligned}
}
In the basis of $E_{ij}\otimes E_{kl}$, this relation becomes
\eq{
    \begin{aligned}
        &\Big[T_{ij}(u),T_{kl}(u')\Big]=\frac{i}{u- u'}\Big\{T_{kj}(u')T_{il}(u)-T_{kj}(u)T_{il}(u')\Big\} \,.
    \end{aligned}
}
We consider the large-$u'$ expansion of the monodromy matrix
\eq{
   T_{kl}(u')= u'^L
\left(\delta_{kl}+\sum_{r=1}\frac{1}{(i u')^{r}}T_{kl}^{(r)}\right)\,,
}
where $T_{kl}^{(1)}$ provides the generators of the global $SU(r+1)$ symmetry.
Taking the trace in the space $0$ on the left-hand side,
the $u'$-free term gives\footnote{A more comprehensive understanding of the symmetries of untwisted spin chains can be found in \cite{ACDFR05}, which is based on the Yangian.}
\eq{
\lt[t_\theta(u),T_{kl}^{(1)}\rt]=\big(e^{i\phi_{l}}-e^{i\phi_{k}}\big)T_{kl}(u)\,.
\label{eq:sym-trans}
}
We see that the transfer matrix commutes with the Cartan generators $T^{(1)}_{kk}$. The off-diagonal generators are in one-to-one correspondence with the root vectors. Expressing $\phi_l$ in terms of the $\theta_i$'s, we find that whenever $\theta_k + \cdots + \theta_{l-1} = 0$, the generators $T^{(1)}_{kl}$ and $T^{(1)}_{lk}$ commutes with the transfer matrix. Together with the Cartan part, they form an $\mathfrak{su}(2)$ subalgebra associated with the positive root $\alpha_k+\cdots +\alpha_{l-1}$.
It is convenient to introduce a matrix to label the correspondence between the twisted angles and the elements of the $T^{(1)}$ matrix:
\begin{equation}
\left(\begin{array}{cccccccc}
0 & \theta_{1} & \theta_{1}+\theta_{2} &  &  & \cdots & \theta_{1}+\cdots+\theta_{r-1} & \theta_{1}+\cdots+\theta_{r}\\
\theta_{1} & 0 & \theta_{2} &  &  &  & \theta_{2}+\cdots+\theta_{r-1} & \theta_{2}+\cdots+\theta_{r}\\
\theta_{1}+\theta_{2} & \theta_{2} & 0 &  &  &  & \theta_{3}+\cdots+\theta_{r-1} & \theta_{3}+\cdots+\theta_{r}\\
\\
\vdots &  &  &  &  & \ddots &  & \vdots\\
\\
\theta_{1}+\cdots+\theta_{r-1} & \theta_{2}+\cdots+\theta_{r-1} & &  &  &  & 0 & \theta_{r}\\
\theta_{1}+\cdots+\theta_{r} & \theta_{2}+\cdots+\theta_{r} & &  &  & \cdots & \theta_{r} & 0
\end{array}\right)
\,.
\end{equation}

By following a similar argument as before, we may deduce the following statements for a partially twisted chain:
\begin{enumerate}
    \item The transfer matrix is invariant under a subgroup of the $SU(r+1)$ symmetry.
    \item The Bethe states are the highest-weight states of the subgroup symmetry.
\end{enumerate}

In the fully twisted case where all the twist parameters take generic non-zero values, the symmetry is broken to $U(1)^{r}$. The symmetry is restored on each locus $\theta_k+\cdots +\theta_{l-1} = 0$.
These loci may intersect and more generators in \eqref{eq:sym-trans} will commute with the transfer matrix, leading to a larger subalgebra of $\mathfrak{su}(r+1)$. For $A_2$, if $\theta_1$, $\theta_2$ or $\theta_1+\theta_2$ vanishes, the symmetry is enhanced to $SU(2) \times U(1)$. When $\theta_1,\theta_2=0$, one recovers the $SU(3)$ symmetry.
In Table \ref{symbreak}, we present more examples of the symmetry enhancement for the $A_3$ case. On each phase boundary, an $SU(2)$ symmetry is restored. On the intersection of two such phase boundaries, there could be an $SU(3)$ or $SU(2)\times SU(2)$ symmetry, depending on whether the corresponding two positive roots can be assembled into an $\mathfrak{su}(3)$ subalgebra or not.  This can be generalized to the intersection of any number of phase boundaries. When the twist angles are turned off, {\it i.e.,} at the origin of the space of twist angles, the whole $SU(r+1)$ symmetry is recovered.

\begin{table}[!h]
\center
\begin{tabular}{c|c} \hline
Phase boundaries & Subalgebra \\
\hline
\hline
\makecell{${\theta_1= \theta_2=0}$, ${\theta_1=\theta_2+\theta_3=0}$, \\ ${\theta_2= \theta_3=0}$, ${\theta_3= \theta_1+\theta_2=0}$ }& $\mathfrak{su}(3) \oplus \mathfrak{u}(1)$ \\\hline
\makecell{${\theta_1= \theta_3=0}$, ${\theta_2= \theta_1+\theta_2+\theta_3=0}$,\\ ${\theta_1+\theta_2=\theta_2+\theta_3=0}$ }& $\mathfrak{su}(2)\oplus\mathfrak{su}(2)\oplus\mathfrak{u}(1)$\\ \hline
\makecell{${\theta_1=0}$, ${\theta_2=0}$, ${\theta_3=0}$, ${\theta_1+\theta_2=0}$, \\  ${\theta_2+\theta_3=0}$, ${\theta_1+\theta_2+\theta_3=0}$} & $\mathfrak{su}(2)\oplus \mathfrak{u}(1) \oplus \mathfrak{u}(1)$\\ \hline
\end{tabular}
\caption{Phase boundaries and their symmetries in the $SU(4)$ spin chain. The symmetry is broken to $U(1)^3$ at a generic point in the space of twist angles and is restored on the phase boundaries.} 
\label{symbreak}
\end{table}

\section{Untwisting the spin chain}\label{s:counting}

A generic twist breaks the symmetry algebra of the untwisted model to its maximal torus $\mathfrak{u}(1)^{r}$. We first decompose the Hilbert space of the spin chain into subspaces weighted by the $U(1)$'s and denote the multiplicity as $c_{s,L}(\vec{M})$. The same Hilbert space can be decomposed into subspaces of the highest-weight representations of the symmetry algebra of the periodic spin chain, and we denote the multiplicity by $\mu_{\lambda}$. $c_{s,L}(\vec{M})$ counts the  physical solutions to the BAEs \eqref{BAE-A2} with all $\theta_i\neq 0$. $\mu_{\lambda}$, on the other hand, counts the physical solutions to the BAEs in the untwisted limit with all $\theta_i=0$. In this section, we provide a formula of $\mu_{\lambda}$ in terms of the $c_{s,L}(\vec{M})$'s. 

\subsection{Spin-chain states and restricted-occupancy problem}

Counting the configurations in a spin chain may be rephrased combinatorially as a restricted-occupancy problem \cite[section~2]{Freund:1956ro}, as shown in Figure~\ref{fig: 2d}:
\begin{quote}
    {\it We place $M$ indistinguishable boxes inside a square of size $L \times 2s$, aligning to the left. The number of configurations will be denoted $c_{s,L}(M)$.
    }
\end{quote}
\begin{figure}[!h]
\centering
	\includegraphics[scale=0.20]{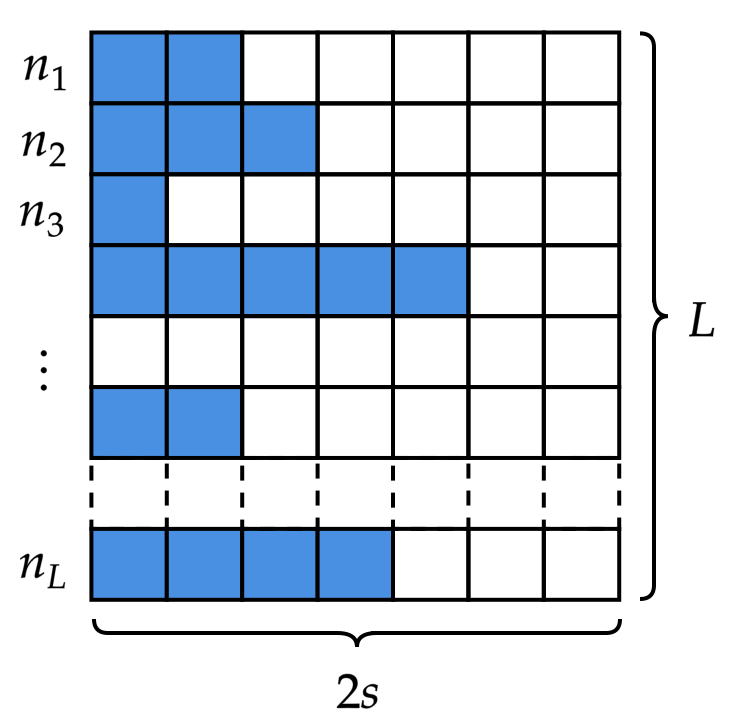}
	\caption{The Hilbert space of an $\mathfrak{su}(2)$ spin-$s$ spin chain corresponds to a restricted-occupancy problem. $n_i$ is the number of boxes in the $i$-th row and $\sum_{i=1}^L n_i = M$.}
	\label{fig: 2d}
\end{figure}
 
A powerful tool to solve enumerative problems in combinatorics is to construct a generating function whose coefficients give the answer we are looking for. For the above specific problem, the generating function has been given in \cite{Freund:1956ro}: 
\eq{
\label{eq:2dgen}
    g_{s,L}(x):=  \left(\frac{1-x^{2s+1}}{1-x}\right)^L\,.
}
The coefficient of $x^{M}$ in the expansion of $g_{s,L}(x)$ is the answer to the restricted-occupancy problem stated above.

For higher-rank spin chains, one may formulate a generalized combinatorial problem, which was called {\it 3d restricted occupancy} in \cite{Shu:2022vpk}, as shown in Figure~\ref{fig: 3d}:
\begin{quote}
    {\it Suppose now we have a total of $M=M_1 + \cdots + M_r$ indistinguishable 3d boxes. The task is again to place them inside a square of size $L\times 2s$ and align them to the left. A box can be placed on top of another with a maximum of $r$ layers. The number of boxes in the $i$-th layer should be $M_i$ with $2sL \geq M_1\geq M_2 \geq \cdots \geq M_r$.  How many configurations are there?}
\end{quote}
\begin{figure}[!h]
\centering
	\includegraphics[scale=0.5]{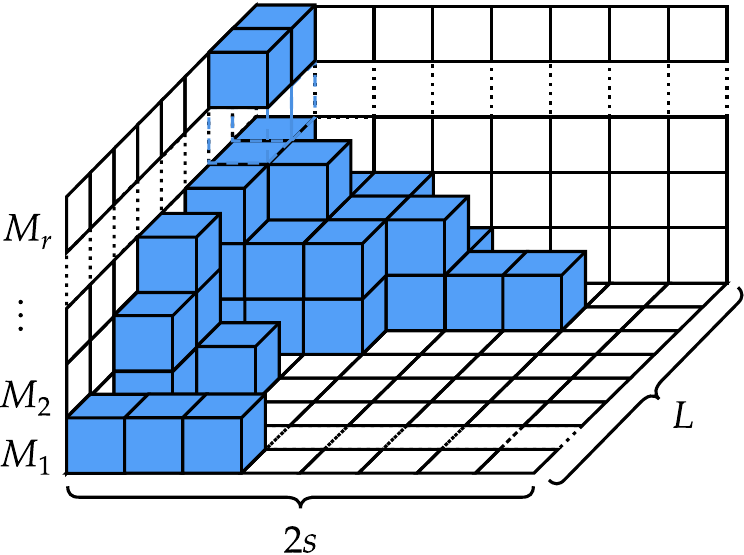}
	\caption{A 3d restricted-occupancy problem. On each level, we have a 2d restricted-occupancy with $n_\alpha^{(a)}$ boxes in each row and a total of $M_a$ boxes. Each row on a higher level has no more boxes than the lower level.}
	\label{fig: 3d}
\end{figure}

It is also tempting to construct a generating function to solve this 3d restricted-occupancy problem. First, we need to introduce $r$ formal variables denoted by $x_i$ so that in the expansion of the generating function, the coefficients of $x_1^{M_1}x_2^{M_2}\cdots x_r^{M_r}$ would give the answer. Second, the meaning of ``a box can be placed on top of another'' implies a nesting structure in the generating function. For $r=2$, a na\"ive generating function can be written as 
\eq{\label{eq:gen2layers1}
g_{s,L}(x_1, x_2) = \left[1 + x_1(1+x_2) + x_1^2(1+x_2+x_2^2) + \cdots + x_1^{2s} (1+ x_2 + \cdots + x_2^{2s})\right]^L\,.
}
However, this is too cumbersome to generalize. The trick is to take advantage of this nesting structure and encode it in the variables. One can immediately recognize that if we view the above na\"ive generating function in terms of the ``nesting variables'' $\{1, x_1, x_1 x_2\}$, $g_{s,L}(x_1, x_2)$ is just the $L$-th power of the Schur polynomial determined by a one-row Young diagram $\lambda = (2s,0,\ldots,0)$. More specifically, the above generating function can be rewritten as 
\eq{\label{eq:gen2layers2}
    g_{s,L}(x_1, x_2) = \left[ S_{(2s)}(1,x_1,x_1 x_2) \right]^L\, ,
}
where 
\eq{
    S_{(2s)}(1,x_1,x_1 x_2) = \frac{(1 -x_2)-(1-x_1 x_2)x_1^{2s+1}+ (1-x_1) x_1^{2s+1}x_2^{2s+2}}{(1-x_1)(1-x_2)(1-x_1 x_2)}\,.
}

This generating function for solving the 3d restricted-occupancy problem can be generalized to arbitrary $r$, and its definition is given below:
\begin{definition}
The generating function for the 3d restricted-occupancy problem is
\EQ{
     g_{s,L}(\vec x):= \left[S_{(2s)}\lt(1,x_1, \ldots, x_1 \cdots x_r \rt) \right]^L \, \,,\label{def-genf}
}
where $S_\lambda$ denotes the Schur polynomial associated with a partition $\lambda=(\lambda_1,\lambda_2,\ldots, \lambda_{r+1})$ of $2sL$, satisfying  g $\lambda_i\in\mathbb{Z}_{\geq 0}$, $\lambda_1\geq \lambda_2\geq \dots\geq \lambda_{r+1}$ and $\sum_{i=1}^{r+1}\lambda_i=2sL$,
\EQ{
\label{eq:schur}
    S_\lambda(\{x_i\}_{i=1}^{r+1}):= \frac{\det(x_j^{\lambda_i+n-i})}{\det(x_j^{n-i})}\,.
}
Given a sequence of $r$ integers $\vec{M}:= (M_1,M_2,\ldots,M_r)$ such that $2sL \geq M_1 \geq  \ldots \geq M_r \geq 0$, and let $c_{s,L}(\vec M)$ be the coefficient of $x_1^{M_1} x_2^{M_2}\cdots x_r^{M_r}$ in the expansion of the generating function \eqref{def-genf}, then $c_{s,L}(\vec M)$ counts the ways of assigning $M_a$ boxes to $L$ cells $\{n^{(a)}_{1}, \ldots, n^{(a)}_{L}\}$, subject to the conditions $\sum_{\alpha=1}^L n_\alpha^{(a)} = M_a$, $n_\alpha^{(a)} \leq n_\alpha^{(a-1)}$, and $n_\alpha^{(1)}\leq 2s$. 
\end{definition}
Note that the Schur polynomial $S_\lambda$ with $r$ variables is nothing but the character of the Lie algebra $A_r$ associated with the representation labeled by $\lambda$. In particular, if we specialize to the case when $r=1$,  \eqref{def-genf} will reproduce the generating function for the 2d restricted-occupancy problem,  {\it i.e.,} \eqref{eq:2dgen}. Specializing (\ref{def-genf}) to the $r=2$ case leads to \eqref{eq:gen2layers2}, or equivalently \eqref{eq:gen2layers1}. The restricted-occupancy coefficient $c_{s,L}(\vec{M})$ stands for the number of states with the same weight inside a tensor product.

For the general Schur polynomial defined as \eqref{eq:schur}, the sum of its coefficients can be obtained by taking $x_i$ to 1, namely, we have
\eq{
  S_\lambda(1,\ldots,1) = \prod_{1\leq i < j \leq r+1} \frac{\lambda_i - \lambda_j + j - i}{j-i} \,.
}
Using this identity, one can check that the numbers of spin-chain states add up to \eq{\label{eq:idcomplete}
   \sum_{\vec{M}} c_{s,L}(\vec{M}) = g_{s,L}(\vec x)\Big|_{\{x_i=1\}} = \binom{2s+r}{r}^L \,.
}
This verifies the completeness of the Hilbert space with generic twist angles \eqref{eq:dim-Hilb}.

\subsection{Descendants and a counting formula}
\begin{definition}
Given a root system associated with a Lie algebra $\mathfrak{g}$, let $\alpha_i$ be the simple roots, $\alpha = \sum_{i=1}^r n_i\alpha_i$ be a positive root and $t_i$, $i=1,\ldots, r$, be a set of formal variables. Using the multi-index notation $t^\alpha = t_1^{n_1}\cdots t_r^{n_r}$, the Verma-module character is defined as \cite[(8.87) with $\Lambda=0$]{Ramond_2010}
\eq{
\chi(\mathfrak{g}) :=  \prod_{\alpha \in \Delta^+} (1- t^\alpha)^{-1}\label{eq:character}
\,.}
We will refer to it simply as the character in this paper, and note that its reciprocal $\chi(\mathfrak{g})^{-1}$ is nothing but the Weyl denominator. 
\end{definition}

\begin{definition}
The shift operator $\mathcal{D}_{P(t_1,\ldots,t_r)}$ associated with a polynomial $P(t_1,\ldots,t_r)$ is defined as
\eq{
\mathcal{D}_{P(t_1,\ldots,t_r)} f(\vec M) :=  \sum_{t^\beta \in P(t_1,\ldots,t_r)} \sgn(\beta) f(M_1 - \beta_1, \cdots, M_r - \beta_r) 
\,,
\label{shift}
}
where $\sgn(\beta)$ is the sign in front of $t_1^{\beta_1}\cdots t_{r}^{\beta_r}$ in the $t$-expansion of $P(t_1,\ldots,t_r)$. 
\end{definition}

\paragraph{The counting formula} 

The number of physical Bethe states of the untwisted model is related to that of the twisted model as
\eq{
\mu_{s,L}(\vec M)  =  \mathcal{D}_{\chi(\mathfrak{g})^{-1}} c_{s,L}(\vec M) \,,
\label{thm}
}
Mathematically, this formula expresses the multiplicity $\mu_{\lambda}$ of the occurrence of the irreducible representation $V_\lambda$ in the $L$-fold tensor power of the $2s$-symmetric representation of the Lie algebra $\mathfrak{g} = \mathfrak{su}(r+1)$. We use interchangeably $\mu_\lambda$ and $\mu_{s,L}(\vec M)$ to denote the multiplicity associated with the highest weight. We identify the highest weight \eqref{weight} with the Young diagram  
\eq{
\label{eq:youngdiagram}
    \lambda=(2sL-M_1,M_1-M_2,\ldots,M_{r-1}-M_r,M_r)\,.
}
The Dynkin labels are given by the difference of the length of adjacent rows, $\lambda_i - \lambda_{i+1}$. 

The counting formula tells us how the solutions counted by $c_{s,L}(\vec{M})$ in the case where the $SU(r+1)$ symmetry is broken to $U(1)^r$ recombine into the highest-weight states of $\mathfrak{su}(r+1)$ counted by $\mu_{\lambda}$ in the untwisted limit. In the language of representation theory, the $c_{s,L}(\vec{M})$'s can be understood as the multiplicities of (irreducible) representations of $\mathfrak{u}(1)^r$ appearing in the decomposition of the tensor product of representations of $\mathfrak{su}(r+1)$.

\begin{example}\label{ex:su2}
    For rank 1, $\chi(\mathfrak{su}(2))^{-1}=1-t$. This suggests the formula 
    \eq{
       \mu_{s,L}(M)=c_{s,L}(M)-c_{s,L}(M-1)\,.\label{dc-formula-2}
    }
\end{example}

\paragraph{A physical reasoning:} We first consider the simplest cases of $r=1$ and $r=2$ to explain the idea behind the counting formula. 

For $r=1$, let $\ket{M,i}$ denote the $i$-th eigenstate of the XXX spin chain with $M$ magnons. In the untwisted limit, the spin chain is $SU(2)$ invariant and each $\ket{M,i}$ is a highest-weight state of $SU(2)$,  {\it i.e.,} $S^+\ket{M,i}=0$. When we turn on a generic twist angle $\theta$, the symmetry of the spin chain is broken to $U(1)$, and the $SU(2)$ multiplets of $\ket{M,i}$ split into states carrying the $U(1)$ charge. The Hilbert subspace of a twisted spin chain with fixed $U(1)$ charge $M$ (note that the raising/lowering operator $S^\pm$ respectively carries the charge $\mp 1$) in the untwisted limit is given by 
\eq{
    \{\ket{M,i}\}\cup\{S^-\ket{M-1,i}\}\cup\cdots\cup\{(S^-)^M\ket{0,i}\}\,.\label{untwist-str1}
}
The number of the above states is denoted as $c_{s,L}(M)$ for a spin-$s$ chain with a generic twist angle. Similarly, the Hilbert subspace of charge $M-1$ in the untwisted limit is given by  
\eq{
    \{\ket{M-1,i}\}\cup\{S^-\ket{M-2,i}\}\cup\cdots\cup\{(S^-)^{M-1}\ket{0,i}\}\,.\label{untwist-str2}
}
As the action of $S^-$ is single-valued, the map between $\ket{M,i}$ and $S^-\ket{M,i}$ is one-to-one, then the number of highest-weight states $\ket{M,i}$ is simply given by \eqref{dc-formula-2}.

\paragraph{Example in Bethe ansatz} Let us depict how the Hilbert-space structure in \eqref{untwist-str1} and \eqref{untwist-str2} appears in the untwisted limit and show how the idea works to count the number of highest-weight states as \eqref{dc-formula-2}  with an explicit example of $\mathfrak{su}(2)$ Heisenberg spin chain with $L=4$ and $s=1/2$. We emphasize again, however, the counting formula itself comes purely from the representation theory and does not rely on any assumptions in the Bethe ansatz approach. When we impose a twisted boundary condition with the twisting angle $\theta$ very closed to $0$, e.g. $e^{i\theta}=1.01$, six solutions at $M=2$ are given by (up to normalizations and only three significant digits are shown), 
\begin{align}
\begin{split}
    &\ket{2,1}=\cB(0.289+8.29\times 10^{-4}i)\cB(-0.289+8.29\times 10^{-4}i)\ket{0},\\
    &\ket{2,2}=\cB(0.502i)\cB(-0.498i)\ket{0},\\
    &\widetilde{\ket{2,1}}=\cB(0.500+2.01\times 10^{2}i)\cB(-0.500+3.08\times 10^{-8}i)\ket{0},\\
    &\widetilde{\ket{2,2}}=\cB(-0.500+2.01\times 10^{2}i)\cB(0.500+3.08\times 10^{-8}i)\ket{0},\\
    &\widetilde{\ket{2,3}}=\cB(2.01\times 10^{2}i)\cB(-7.70\times 10^{-9}i)\ket{0},\\
    &\overline{\ket{2,1}}=\cB(1.74\times 10^{2}+3.01\times 10^2i)\cB(-1.74\times 10^{2}+3.01\times 10^2i)\ket{0}.
\end{split}
\end{align}
Recall that when the spectral parameter is very large $u\to\infty$, the lowering operator $\cB(u)$ reduces to $S^-$, ${\cal B}( u)\sim i u^{L-1}S^-$. One can see from the numerical computation that in the untwisted limit $\theta\to 0$, $\ket{2,1}$ and $\ket{2,2}$ become highest-weight states of $\mathfrak{su}(2)$, and 
\begin{align}
    \begin{split}
        &\widetilde{\ket{2,1}}\to S^-\ket{1,1}_{\theta=0},\quad \ket{1,1}_{\theta=0}=\cB(-0.5)\ket{0},\\
        &\widetilde{\ket{2,2}}\to S^-\ket{1,2}_{\theta=0},\quad \ket{1,2}_{\theta=0}=\cB(0.5)\ket{0},\\
        &\widetilde{\ket{2,3}}\to S^-\ket{1,3}_{\theta=0},\quad \ket{1,3}_{\theta=0}=\cB(0)\ket{0},\\
        & \overline{\ket{2,1}}\to (S^-)^2\ket{0},
    \end{split}
\end{align}
where we normalized the states properly to avoid the divergence in the untwisted limit. Similarly, at $M=1$ in the twisted case with $e^{i\theta}=1.01$, we have four solutions, 
\begin{align}
    \begin{split}
        &\ket{1,1}=\cB(-0.500+0.124\times 10^{-3}i)\ket{0},\\
        &\ket{1,2}=\cB(0.500+0.124\times 10^{-3}i)\ket{0},\\
        &\ket{1,3}=\cB(6.22\times 10^{-4}i)\ket{0},\\
        &\widetilde{\ket{1,1}}=\cB(4.02\times 10^2i)\ket{0}.
    \end{split}
\end{align}
$\ket{1,i}$'s ($i=1,2,3$) become the highest-weight state $\ket{1,i}_{\theta=0}$ in the untwisted limit $\theta\to0$, and $\widetilde{\ket{1,1}}\to S^-\ket{0}$. We then see that except for the highest-weight states $\ket{2,i}$'s ($i=1,2$) remaining in the untwist limit, one can construct a one-to-one correspondence between states at $M=2$ and $M=1$, i.e. $\widetilde{\ket{2,i}}\leftrightarrow \ket{1,i}$ ($i=1,2,3$) and $\overline{\ket{2,1}}\leftrightarrow \widetilde{\ket{1,1}}$. The number of highest-weight states of $\mathfrak{su}(2)$ at $M=2$ can then be found by subtracting the number of twisted solutions at $M=2$ by that of $M=1$, 
\begin{equation}
    \mu_{1/2,4}(2)=c_{1/2,4}(2)-c_{1/2,4}(1)=2.
\end{equation}
\begin{flushright}
    $\Box$
\end{flushright}

For $r=2$, the Hilbert subspace spanned by the states with magnon numbers ($M_1,M_2$) in the untwisted limit is explicitly given by 
\EQ{
    &\{\ket{M_1,M_2,i}\}\cup\{E_{-\alpha_1}\ket{M_1-1,M_2,i}\}\cup\{E_{-\alpha_2}\ket{M_1,M_2-1,i}\}\\
    &\cup\{E_{-\alpha_1}E_{-\alpha_2}\ket{M_1-1,M_2-1,i}\}\cup\{E_{-\alpha_1-\alpha_2}\ket{M_1-1,M_2-1,i}\}\\
    &\cup\{E_{-\alpha_1}^2\ket{M_1-2,M_2,i}\}\cup\{E_{-\alpha_2}^2\ket{M_1,M_2-2,i}\}\cup\cdots\,.
}
The total number of states in the above Hilbert subspace is $c_{s,L}(M_1,M_2)$, and one may re-organize it into a tower of states as 
\EQ{
\tikzset{font=\small}
\begin{tikzpicture}[scale=0.8]
    \node at (0,0) {{\color{violet}$\{\ket{M_1,M_2,i}\}$}};
    \node at (-3.25,-1) {{\color{blue}$\{E_{-\alpha_1}\ket{M_1-1,M_2,i}\}$}};
    \node at (3.25,-1) {{\color{red}$\{E_{-\alpha_2}\ket{M_1,M_2-1,i}\}$}};
    \node at (0,-2) {{\color{blue}$\{E_{-\alpha_2}E_{-\alpha_1}\ket{M_1-1,M_2-1,i}\}$}};
    \node at (0,-2.5) {{\color{red}$\{E_{-\alpha_1}E_{-\alpha_2}\ket{M_1-1,M_2-1,i}\}$}};
    \node at (-6.5,-2.25) {{\color{blue}$\{E_{-\alpha_1}^2\ket{M_1-2,M_2,i}\}$}};
    \node at (6.5,-2.25) {{\color{red}$\{E_{-\alpha_2}^2\ket{M_1,M_2-2,i}\}$}};
    \node at (-6.5,-3.25) {\begin{turn}{90}$\ddots$\end{turn}};
    \node at (0,-3) {$\vdots$};
    \node at (6.5,-3) {$\ddots$};
\end{tikzpicture}\label{f:tower}
}
This tower resembles the structure of the highest-weight module of $\mathfrak{su}(3)$. We aim to work out the number $\mu_{s,L}(M_1,M_2)$ of the highest-weight states ${\color{violet}\{\ket{M_1,M_2,i}\}}$ colored in violet from the given information of $c_{s,L}(M_1,M_2)$. The tower is not infinite and has boundaries where the highest-weight state $\ket{m_1,m_2,i}$ corresponding to the component $\{E_{-j_1}E_{-j_2}\cdots E_{-j_n}\ket{m_1,m_2,i}\}$ with $j_i\in\{\alpha_1,\alpha_2\}$ violates the highest-weight condition by either $m_1-2m_2<0$ or $m_2<0$. We assign $\mu_{\lambda}$ with the magnon charges $(m_1,m_2)$ that violates the highest-weight condition to zero,  {\it i.e.,} 
\eq{
    \mu_{s,L}(m_1,m_2):= 0\,,\quad {\rm for}\ m_1-2m_2<0\ {\rm or}\ m_2<0\,.
}
Using the one-to-one correspondence between the states in $\{\ket{m_1,m_2,i}\}$ and $\{E_{-j_1}E_{-j_2}\cdots E_{-j_n}\ket{m_1,m_2,i}\}$, which is true as long as the lowering operators do not annihilate the highest-weight states, we see that
\EQ{
    c_{s,L}(M_1,M_2)&={\cal D}_{\chi(\mathfrak{su}(3))}\mu_{s,L}(M_1,M_2)\\
    &=\sum_{t_1^{\beta_1}t_2^{\beta_2}\in \chi(\mathfrak{su}(3))}\mu_{s,L}(M_1-\beta_1,M_2-\beta_2)\,,\label{eq:d-to-c}
}
where we extended the definition of the shift operator ${\cal D}$ to the formal power series $\chi(\mathfrak{g})$. 
We observe that the descendants respectively colored in blue and red can be sorted into descendant towers of {\color{blue}$\{E_{-\alpha_1}\ket{M_1-1,M_2,i}\}$} and {\color{red}$\{E_{-\alpha_2}\ket{M_1,M_2-1,i}\}$}. In this way, we can decompose the tower into such descendant towers as 
\EQ{
\begin{tikzpicture}
    \node at (0,0) [violet] {1};
    \node at (-1,-1) [blue] {1};
    \node at (1,-1) [red] {1};
    \node at (0,-2) {${\color{blue}1}+{\color{red}1}$};
    \node at (-2,-2) [blue] {1};
    \node at (2,-2) [red] {1};
    \node at (-3,-3) [blue] {1};
    \node at (-1,-3) {${\color{blue}2}+{\color{red}1}-{\color{orange}1}$};
    \node at (1,-3) {${\color{blue}1}+{\color{red}2}-{\color{cyan}1}$};
    \node at (3,-3) [red] {1};
    \node at (-4,-4) [blue] {1};
    \node at (-2,-4) {${\color{blue}2}+{\color{red}1}-{\color{orange}1}$};
    \node at (0,-3.75) {${\color{blue}2}+{\color{red}2}-{\color{orange}1}$};
    \node at (0,-4.25) {$-{\color{cyan}1}+{\color{gray}1}$};
    \node at (2,-4) {${\color{blue}1}+{\color{red}2}-{\color{cyan}1}$};
    \node at (4,-4) [red] {1};
    \node at (-5,-5) [blue] {1};
    \node at (-3,-5) {${\color{blue}2}+{\color{red}1}-{\color{orange}1}$};
    \node at (-1,-4.75) {${\color{blue}3}+{\color{red}2}-{\color{orange}2}$};
    \node at (-1,-5.25) {$-{\color{cyan}1}+{\color{gray}1}$};
    \node at (1,-4.75) {${\color{blue}2}+{\color{red}3}-{\color{orange}1}$};
    \node at (1,-5.25) {$-{\color{cyan}2}+{\color{gray}1}$};
    \node at (3,-5) {${\color{blue}1}+{\color{red}2}-{\color{cyan}1}$};
    \node at (5,-5) [red] {1};
     \node at (-6,-6) [blue] {1};
    \node at (-4,-6) {${\color{blue}2}+{\color{red}1}-{\color{orange}1}$};
    \node at (-2,-5.75) {${\color{blue}3}+{\color{red}2}-{\color{orange}2}$};
     \node at (-2,-6.25) {$-{\color{cyan}1}+{\color{gray}1}$};
    \node at (0,-5.75) {${\color{blue}3}+{\color{red}3}-{\color{orange}2}$};
    \node at (0,-6.25) {$-{\color{cyan}2}+{\color{gray}2}$};
    \node at (2,-5.75) {${\color{blue}2}+{\color{red}3}-{\color{orange}1}$};
    \node at (2,-6.25) {$-{\color{cyan}2}+{\color{gray}1}$};
    \node at (4,-6) {${\color{blue}1}+{\color{red}2}-{\color{cyan}1}$};
    \node at (6,-6) [red] {1};
    \draw[blue] (-0.75,-0.25)--(-6.5,-6);
    \draw[blue] (-0.75,-0.25)--(5,-6);
    \draw[red] (0.75,-0.25)--(6.5,-6);
    \draw[red] (0.75,-0.25)--(-5,-6);
    \draw[cyan,dashed,thick] (1.25,-1.75)--(5.5,-6);
    \draw[cyan,dashed,thick] (-3,-6)--(1.25,-1.75);
    \draw[orange,dashed,thick] (-1.25,-1.75)--(-5.5,-6);
    \draw[orange,dashed,thick] (3,-6)--(-1.25,-1.75);
    \draw[gray,dotted,ultra thick] (0,-2.75)--(3.25,-6);
    \draw[gray,dotted,ultra thick] (0,-2.75)--(-3.25,-6);
\end{tikzpicture}
}
We simplified the diagram into a set of numbers counting the components at each magnon charge $(m_1,m_2)$. The towers colored in {\color{blue}blue}, {\color{red}red}, {\color{orange}orange}, {\color{cyan}cyan} and {\color{gray}gray} are respectively associated with some descendants of the highest-weight states, {\color{blue}$\{\ket{M_1-1,M_2,i}\}$}, {\color{red}$\ket{M_1,M_2-1,i}\}$}, ${\color{orange}\{\ket{M_1-2,M_2-1,i}\}}$, {\color{cyan}$\{\ket{M_1-1,M_2-2,i}\}$} and {\color{gray}$\{\ket{M_1-2,M_2-2,i}\}$}. The number of highest-weight states in violet is thus given by 
\EQ{
{\color{violet}\mu_{s,L}(M_1,M_2)}
&=c_{s,L}(M_1,M_2)-{\color{blue}c_{s,L}(M_1-1,M_2)}-{\color{red}c_{s,L}(M_1,M_2-1)}\\
    &+{\color{orange}c_{s,L}(M_1-2,M_2-1)}+{\color{cyan}c_{s,L}(M_1-1,M_2-2)}-{\color{gray}c_{s,L}(M_1-2,M_2-2)}\\
    &={\cal D}_{\chi(\mathfrak{su}(3))^{-1}}c_{s,L}(M_1,M_2)\,.
\label{su3}
}

One can generalize the above argument to arbitrary rank. A trivial identity reads 
\eq{
    \mu_{s,L}(\vec{M})={\cal D}_{\chi(\mathfrak{g})^{-1}}{\cal D}_{\chi(\mathfrak{g})}\mu_{s,L}(\vec{M})\,.
}
Due to the correspondence between the tower of states \eqref{f:tower} and the representation of $\mathfrak{g}$, one further expect the following relation,  
\eq{
    c_{s,L}(\vec{M})={\cal D}_{\chi(\mathfrak{g})}\mu_{s,L}(\vec{M})\,.
}
It follows that 
\eq{
    \mu_{s,L}(\vec{M})={\cal D}_{\chi(\mathfrak{g})^{-1}}c_{s,L}(\vec{M})\,.
}
A rigorous but technical proof will be presented separately in a mathematical paper \cite{SZZZ}. 

\subsection{Spin 1/2: hook-length formula}

For $\mathfrak{su}(r+1)$ representations, there exists a simple hook-length formula for the multiplicity $\mu_{\lambda}$
that counts the standard Young tableaux (see e.g., \cite[Hook Length Formula 4.12]{fulton1991representation})
\eq{
    \mu_{\lambda}=\frac{|\lambda|!}{\prod_{i=1}^{r+1}h_i!}\prod_{i<j}(h_i-h_j)\,,\label{eq:hook}
}
where $|\lambda|:=\sum_{i=1}^{r+1}\lambda_i$, the hook length $h_i$ associated with the $i$-th row is defined as
\eq{
\label{eq:hookl}
    h_i:=\lambda_i + r+1-i\,.
}
As a consistency check, we compare the above hook-length formula with our formula \eqref{thm} specified to $s=1/2$. 
From Figure~\ref{fig: 3d} it is straightforward to read off $c_{1/2,L}(\vec{M})$ as
\EQ{
    c_{1/2,L}(\vec{M})
    &={L \choose M_1}\prod_{i=1}^{r-1} {M_i \choose M_{i+1}}
    \\
    &=\frac{|\lambda|!}{\prod_{i=1}^{r+1}  \lambda_i!}
    \,.
}
Note that the partition $\lambda$ is given in terms of $L$ and $M_i$'s as in \eqref{eq:youngdiagram} with $s=1/2$, and $|\lambda|=L$. 

Let us first compare with the rank-1 and rank-2 cases via direct computations. For $\mathfrak{g}=\mathfrak{su}(2)$, the multiplicity formula \eqref{dc} may be written as
\EQ{
\mu_{\lambda}
&=\frac{|\lambda|!}{\lambda_1!\lambda_2!}-\frac{|\lambda|!}{(\lambda_1+1)!(\lambda_2-1)!}
 \\
&=\frac{|\lambda|!}{h_1!h_2!}\lt(h_1-h_2\rt)\,.
}
For $\mathfrak{g}=\mathfrak{su}(3)$, we find from \eqref{su3}
\EQ{
\mu_{\lambda}&=\frac{|\lambda|!}{(\lambda_1+2)!(\lambda_2+1)!\lambda_3!}\Big[(\lambda_1+2)(\lambda_1+1)(\lambda_2+1)-(\lambda_1+2)(\lambda_2+1)\lambda_2\\
    &-(\lambda_1+2)(\lambda_1+1)\lambda_3+(\lambda_2+1)\lambda_2\lambda_3+(\lambda_1+2)\lambda_3(\lambda_3-1)-(\lambda_2+1)\lambda_3(\lambda_3-1)\Big]\\
&=\frac{|\lambda|!}{h_1!h_2!h_3!}(h_1-h_2)(h_1-h_3)(h_2-h_3)\,.
}
For $\mathfrak{g}=\mathfrak{su}(r+1)$, we can prove the following proposition:
\begin{proposition}
For $s=1/2$, the hook-length formula \eqref{eq:hook} can be reproduced from the counting formula \eqref{thm}.
\end{proposition}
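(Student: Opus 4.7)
The plan is to substitute the explicit multinomial expression for $c_{1/2,L}(\vec{M})$ into the counting formula \eqref{thm}, invoke the Weyl denominator formula for $\chi^{-1}$, and recognize the resulting signed sum as a Vandermonde determinant in the hook lengths.

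At $s=1/2$ the restricted-occupancy coefficient collapses to a single multinomial, $c_{1/2,L}(\vec{M})=L!/\prod_{i=1}^{r+1}\lambda_i!$, where the $\lambda_i$ are read off from \eqref{eq:youngdiagram}. On the operator side, the Weyl denominator formula for $\mathfrak{g}=\mathfrak{sl}_{r+1}$ gives
\eq{\chi(\mathfrak{g})^{-1}=\sum_{w\in S_{r+1}}\sgn(w)\,t^{\rho-w\rho},}
so that $\mathcal{D}_{\chi^{-1}}$ decomposes into an alternating sum of single shifts. Working in the orthonormal $e_i$-basis where $\rho=(r,r-1,\ldots,1,0)$, the $e_i$-component of $\rho-w\rho$ equals $w^{-1}(i)-i$. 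Combining this with $\lambda_i=M_{i-1}-M_i$ (setting $M_0=L$, $M_{r+1}=0$) yields the induced shift on partition components $\lambda_i\mapsto\lambda_i+w^{-1}(i)-i$, equivalently $\lambda_i\mapsto h_i-(r+1-w^{-1}(i))$.

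Substituting back into \eqref{thm} and pulling out the common factor $L!/\prod_i h_i!$,
\eq{\mu_\lambda=\frac{L!}{\prod_{i=1}^{r+1}h_i!}\sum_{w\in S_{r+1}}\sgn(w)\prod_{i=1}^{r+1}h_i^{\underline{r+1-w^{-1}(i)}},}
where $h^{\underline{k}}:=h(h-1)\cdots(h-k+1)$ denotes the falling factorial; with the convention $1/m!=0$ for $m<0$, shifts producing any negative $\lambda_i'$ are automatically suppressed. Reindexing $\sigma=w^{-1}$ (noting $\sgn(w)=\sgn(\sigma)$), the alternating sum becomes the determinant $\det(h_j^{\underline{r+1-i}})_{i,j=1}^{r+1}$.

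The final step uses that each $h^{\underline{k}}$ is a monic polynomial of degree $k$ in $h$, so elementary row operations reduce the matrix $(h_j^{\underline{r+1-i}})$ to the standard power matrix $(h_j^{r+1-i})$, whose determinant is the Vandermonde $\prod_{i<j}(h_i-h_j)$. Assembling the pieces reproduces \eqref{eq:hook}. The main subtlety is bookkeeping: one must reconcile the sign and ordering conventions of the Weyl formula with those of the shift operator so as to pin down the precise shift $\lambda_i\mapsto h_i-(r+1-w^{-1}(i))$; once that identification is in place, the remainder is a routine determinant manipulation.
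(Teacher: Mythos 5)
Your proposal is correct and follows essentially the same route as the paper: expand $\chi(\mathfrak{g})^{-1}$ via the Weyl denominator formula into a signed sum over $\mathfrak{S}_{r+1}$, translate the shifts on $\vec M$ into the permutation shifts $\lambda_i\mapsto\lambda_i+\sigma(i)-i$ of the rows, factor out $L!/\prod_i h_i!$, and identify the remaining alternating sum of falling factorials with the Vandermonde $\prod_{i<j}(h_i-h_j)$. The only (cosmetic) difference is in that last step, where you perform monic row reduction on $\det\bigl(h_j^{\underline{r+1-i}}\bigr)$ while the paper argues via antisymmetry in the $h_i$ plus a degree count; both are standard and equivalent.
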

\begin{proof}
Recall that each term in the counting formula is obtained from a shift associated with a monomial in the Weyl denominator $\chi(\mathfrak{g})^{-1}$, 
\eq{
    \chi\lt(\mathfrak{su}(r+1)\rt)^{-1}=\\(1-t_1)\cdots(1-t_r)(1-t_1t_2)\cdots(1-t_{r-1}t_r)\cdots(1-t_1t_2\cdots t_r)\,,
}
and each $t_i$ in the monomial moves one box from the $(i+1)$-th row to the $i$-th row. We thus see that the shift operator ${\cal D}_{\chi^{-1}}$ generates a permutation action of the symmetric group $\mathfrak{S}_{r+1}$ on the Young diagram.
We have
\EQ{
\label{eq:hookpf1}
    \mathcal{D}_{\chi(\mathfrak{su}(r+1))^{-1}} c_{1/2,L}(\vec M) &= \sum_{\sigma \in \mathfrak{S}_{r+1}} \sgn(\sigma)\frac{\lambda!}{\prod_{i=1}^{r+1} (\lambda_i + \sigma(i) - i)} \\
&=\sum_{\sigma \in \mathfrak{S}_{r+1}}\sgn(\sigma)\frac{\lambda!}{\prod_{i=1}^{r+1} h_i!} \prod_{i=1}^{r+1}\prod_{j=0}^{r-\sigma(i)} (h_i + j)\,,
}
where we used \eqref{eq:hookl} to reach the second line. It remains to show that
\eq{
\sum_{\sigma \in \mathfrak{S}_{r+1}}\sgn(\sigma)\prod_{i=1}^{r+1} \prod_{j=0}^{r-\sigma(i)} (h_i + j)= \prod_{i < j} (h_i - h_j)\,.}
For each $h_i$, the left-hand side is a polynomial of degree $r$. For each pair $i$ and $j$, we may divide the set of permutations into $\sigma$ and $(ij)\sigma$. If $h_i = h_j$, then the two sets cancel with each other and the polynomial vanishes.  We conclude that it must be the Vandermonde polynomial.
\end{proof}

\subsection{Completeness of the Hilbert space}
Having obtained the multiplicity for the irreducible representations, it remains to show that all the states add up to the dimension of the Hilbert space \eqref{eq:dim-Hilb} in the untwisted limit,  {\it i.e.,} 
\eq{
    \dim H=\sum_{\vec{M}}\mu_\lambda(\vec{M})\dim R_{\lambda}\,,
    \label{eq:decompose}
}
where the summation runs over all $\vec{M}$ that corresponds to a Young diagram. We first remark that the completeness of the Hilbert space is intuitively obvious, as the states merely recombine into highest-weight modules. To prove it combinatorially, however, is nontrivial.

For rank 1 and $s=1/2$, the completeness of the Hilbert space may be verified directly
\EQ{
\sum_{M=0}^{L/2} \left[{L \choose M} - {L \choose M-1}\right] (L-2M+1)
&= {L \choose L/2} + 2\sum_{M=0}^{L/2-1} {L \choose M}\\
&= 2^L \,,
\label{su2complete}
}
where we evaluated the telescoping sum in the first step. For $s>1$, the completeness was established in \cite{Kirillov1985} using combinatorial identities. For higher ranks, this is even more nontrivial \cite{Kirillov_1987}. Instead of attempting a full proof, here we present one example to show how the counting formula reproduces the dimension of the full Hilbert space.

\begin{example} For $\mathfrak{g}=\mathfrak{su}(3)$, the decomposition of the Hilbert space of two spin-1 particles is given by 
\eq{
\tiny
\ytableausetup{centertableaux}
    \ydiagram{2}\otimes \ydiagram{2}={\ydiagram{4} \atop {M_1 = 0, M_2 = 0}} \oplus {\ydiagram{3,1} \atop M_1 = 1, M_2 = 0} \oplus {\ydiagram{2,2} \atop M_1 = 2, M_2 = 0}\,,
}
or in terms of the dimensions of the representations, 
${\bf 6}\otimes {\bf 6}={\bf 15}'\oplus {\bf 15}\oplus \overline{\bf 6}
$.
We have labeled the corresponding magnon numbers below each Young diagram. 
Our formula \eqref{su3} shows that $\mu_{1,2}(0,0) = \mu_{1,2}(1,0) =\mu_{1,2}(2,0)= 1$. Note that although the combination of magnon numbers $M_1=2$, $M_2=1$ is allowed as a Young diagram ${\tiny \ydiagram{2,1,1}}$, one can easily check that 
$\mu_{1,2}(2,1)=0$,
and the corresponding BAEs have no physical solution. We note that the representation theory imposes implicit selection rules on the BAEs in this way. 

\end{example}

\section{Partial twists}\label{s:example}

\label{sec:partial}

As described in section \ref{sec:symmetries} for the $SU(r+1)$ case, a partial twist breaks the non-Abelian symmetry to a subgroup. Taking some special combination of the twist angles $\theta_i$ to zero corresponds to restoring the corresponding non-Abelian subgroup symmetry of the full $SU(r+1)$. For a partial twist, one can still write down a counting formula similar to \eqref{thm}. The corresponding character can be written down as follows. Note that the character \eqref{eq:character} is given by a product over all the positive roots of the Lie algebra, and each positive root corresponds to an $\mathfrak{su}(2)$ subalgebra. When $\sum_{i=1}^r\theta_i\ne 0$,  {\it i.e.,} the non-Abelian symmetry associated with the $\mathfrak{su}(2)$ subalgebra of the positive root $\alpha = \sum_{i=1}^r \alpha_i$ is broken, we replace $\prod_{i=1}^r t_i \to 0$ in the character $\chi(\mathfrak{g})$, or equivalently, we remove $\alpha$ from the product of $\chi(\mathfrak{g})$ to obtain the new character for the partial twist.

\subsection{Algebra decomposition and branching rule}
Mathematically, the above partial twist of a spin chain can be understood as a restricted representation. Let us consider the following decomposition
\eq{
\label{eq:decomp}
    \mathfrak{g} \longrightarrow \bigoplus_{j=1}^k \mathfrak{g}_j\, , 
}
where each $\mathfrak{g}_j$ is a subalgebra of $\mathfrak{g}$. $\mathfrak{g}_j$ and $\mathfrak{g}_\ell$ are not necessarily distinct for $j\neq \ell$ so there is no need to write down the multiplicity in the above algebra decomposition. We have $\sum_{j=1}^k \rank{\mathfrak{g}_j} = r$. The positive roots of $\mathfrak{g}$ decompose into a disjoint union of positive roots of its subalgebras as
\eq{
    \Delta^+ \longrightarrow \coprod_{j=1}^k \Delta_j^+ \,.
}
In our convention, when some subalgebra $\mathfrak{g}_{\ell}$ is $\mathfrak{u}(1)$, $\Delta_\ell^+(\mathfrak{u}(1))$ is an empty set. One can understand the above decomposition by specifying a choice of positive roots that are preserved by the subalgebras. After the decomposition, we still use the same labeling of roots in $\Delta^+$ for the roots preserved in $\coprod_{j=1}^k \Delta_j^+$. This will help us keep track of different decompositions: even though the decomposition of the algebra could be the same, the decomposition of roots and so the restricted representation of the subalgebras would be different. This point will be clearer in the example of $\mathfrak{su}(3)$ in Section~\ref{s:su3example}. 

Given a subset of positive roots $D^+\subset \Delta^+$, there is a smallest decomposition of $\Delta^+$ such that 
\eq{
    D^+ \subset \coprod_{j=1}^k \Delta_j^+ \subset \Delta^+\,.
}
In other words, the decomposition $\coprod_{j=1}^k \Delta_j^+$ determined by $D^+$ can be defined as the intersection of all decompositions of $\Delta^+$ that contain $D^+$. 
\begin{example}[label=ex:decomp]
Consider $\mathfrak{g}=\mathfrak{su}(r+1)$ and choose $D^+ = \{\alpha_2 + \alpha_3, \alpha_4\}$, then the decomposition of $\Delta^+$ determined by $D^+$ is given as 
\eq{
    \Delta^+ \longrightarrow \Delta_1^+ = \{\alpha_2 + \alpha_3, \alpha_4, \alpha_2+\alpha_3+\alpha_4\}\,,
}
where one can see that $\Delta_1^+ \cong \Delta^+(\mathfrak{su}(3))$, indicating  the following decomposition 
\eq{
\mathfrak{su}(r+1) \longrightarrow \mathfrak{su}(3)\oplus \mathfrak{u}(1)^{r-2}\,.
}
\end{example}

The given subset $D^+$ corresponds to a partial twist for the spin chain. More specifically, for each element $\alpha = \sum_i \alpha_i$ in $D^+$, the linear combination of  $\sum_i \theta_i$ vanishes. In particular, if a simple root $\alpha_i$ appears in $D^+$, then $\theta_i = 0$.

Furthermore, from the point of view of the character, the decomposition can also be understood as keeping $t^\alpha$ in the expression if $\alpha\in\coprod_{j=1}^k\Delta_j^+$, while setting $t^{\alpha^\prime} = 0$ if $\alpha^\prime \in \Delta^+-\coprod_{j=1}^k\Delta_j^+$. The original character $\chi(\mathfrak{g})$ decomposes into a product of characters of subalgebras
\eq{
    \chi(\mathfrak{g}) \longrightarrow \prod_{j=1}^k \chi(\mathfrak{g}_j)\, .
}
If some $\Delta_\ell^+$ is empty, then the corresponding algebra will be $\mathfrak{g}_\ell = \mathfrak{u}(1)$, and $\chi(\mathfrak{u}(1)) = 1$. 

As we have discussed for $\mathfrak{g}=\mathfrak{su}(r+1)$, the Bethe states of the periodic spin chain are the highest-weight states. Once some twist parameters are turned on, the algebra $\mathfrak{g}$ decomposes to its subalgebras $\oplus_i \mathfrak{g}_i$. An immediate result is that some original descendants of $\mathfrak{g}$ would become the highest-weight states of $\oplus_i \mathfrak{g}_i$. The physical Bethe states are now counted by the branching coefficients of the restricted irreducible representations. Let us focus on $\mathfrak{g} = \mathfrak{su}(r+1)$ in the rest of this section. To discuss the restricted irreducible representation for the algebra decomposition (\ref{eq:decomp}), we first need to work out the Young diagrams labeling these restricted representations of $\oplus_{j=1}^k \mathfrak{g}_j$. 
\paragraph{Branching rule}
Recall that the original Young diagram $\lambda$ is determined by the spin-chain data $(s, L,\vec{M})$ as 
\eqref{eq:youngdiagram}. For each positive root $\alpha = e_i - e_j$ in $D^+$, we pick the $i$-th and $j$-th rows of the Young diagram and combine them into a two-row Young diagram. 
For any pair of two-row Young diagrams sharing the same row, glue them together in the same order as in $\lambda$.  Repeat this procedure until any two distinct Young diagrams do not have rows in common. 
We end up with a collection of smaller Young diagrams extracted from $\lambda$, which we call $\lambda^{D^+}$. There could be some rows in the Young diagram $\lambda$ that would not be picked in the first step. These rows become the one-row Young diagrams that label the representations of $\mathfrak{u}(1)$'s.
Each element $\lambda^{(i)}\in \lambda^{D^+}\cup \{\lambda_\ell \mid \text{unpicked rows in $\lambda$}\}$ labels an irreducible representation of $\mathfrak{g}_i$. In the end, the restricted representation of $\bigoplus_{j=1}^k \mathfrak{g}_j$ is denoted by a collection of Young diagrams 
\eq{\label{eq:inducedyoung}
    \Lambda=\big(\lambda^{(1)},\ldots, \lambda^{(k)} \big)\,,\quad \text{with $\lambda^{(i)}\in \lambda^{D^+}\cup \{\lambda_\ell \mid \text{unpicked rows in $\lambda$}\}$}\,.
}

\begin{example}[continues=ex:decomp] Consider again the decomposition specified by $D^+= \{\alpha_2+\alpha_3,\alpha_4\}$. 
Following the branching rule, we obtain the final Young diagram $\lambda^{\{\alpha_2+\alpha_3,\alpha_4\}} = (M_1-M_2,M_3-M_4,M_4-M_5)$ labeling the restricted representations. See Figure~\ref{fig:extract}.
\begin{figure}[!h]
\centering
\includegraphics[scale=0.8]{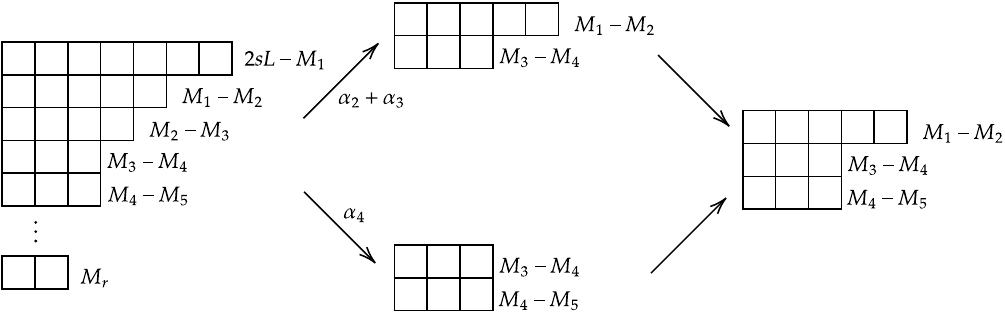}
\caption{An example of the branching rule for the positive roots $D^+ = \{\alpha_2+\alpha_3, \alpha_4\}\subset \Delta^{+}(\mathfrak{su}(r+1))$. In the first step, one extracts a pair of two-row Young diagrams associated with $\alpha_2+\alpha_3$ and  $\alpha_4$ respectively. In the second step, the two Young diagrams are glued along the common row.}
\label{fig:extract}
\end{figure}
\end{example}

Similar to the counting formula \eqref{thm}, the number of physical Bethe states of the partially twisted model is related to that of the twisted model as
\eq{
\mu^{D^+}_{\Lambda}  =  \mathcal{D}_{\prod_{j=1}^k \chi(\mathfrak{g}_j)^{-1}} c_{s,L}(\vec M) \,.
\label{corollary1}
}
Mathematically, it is the multiplicity of the occurrence of the restricted representation $V_{{\Lambda}}$ of the Lie subalgebra $\bigoplus_{j=1}^k \mathfrak{g}_j$. The Young diagram ${\Lambda}$ is a collection of smaller Young diagrams defined as \eqref{eq:inducedyoung}, which are specified by $D^+$ following the branching rule.

    The fully twisted case can be viewed as a special case of the partially twisted case, in the sense that the subset $D^+$ specifying the decomposition is empty. When $D^+=\varnothing$, the algebra $\mathfrak{su}(r+1)$ is broken to its maximal Cartan subalgebra and the Young diagrams ${\Lambda}$ are the collection of one-row diagrams extracted from $\lambda$, labeling the representations of the $\mathfrak{u}(1)$'s. In this case, the shift operator is the identity, and so $\mu^{\varnothing}_{{\Lambda}}  = c_{s,L}(\vec M)$. The completeness can be easily verified by \eqref{eq:idcomplete}. The corresponding generating function for $\mu^{\varnothing}_{{\Lambda}}$ is just \eqref{def-genf}.

The completeness of the Hilbert space is again expected as 
\eq{
    \dim H=\sum_{\vec{M}}\mu^{D^+}_{{\Lambda}}\dim R_{{\Lambda}}\,.
}
The summation now runs over all configurations of the magnon numbers $\vec{M}$ such that each component of ${\Lambda}=(\lambda^{(1)},\ldots, \lambda^{(k)} )$ labels a representation of the unbroken symmetry subalgebra. For each ${\Lambda}$, the representation is the tensor product $R_{{\Lambda}} = R_{\lambda^{(1)}}\otimes R_{\lambda^{(2)}}\otimes \cdots \otimes R_{\lambda^{(k)}}$ and so 
\eq{
    \dim R_{{\Lambda}} = \dim R_{\lambda^{(1)}} \times \dim R_{\lambda^{(2)}} \times \cdots \times \dim R_{\lambda^{(k)}}\,.
}
Whenever a component $\lambda^{(i)}$ is a one-row Young diagram, it labels a representation of $\mathfrak{u}(1)$ and so $\dim R_{\lambda^{(i)}} = 1$. For simplicity of notation, we omit writing out these one-row Young diagrams explicitly.

In the remainder of this section, we provide two examples of partial twists in the rank-2 and rank-3 cases to illustrate this phenomenon in more detail.

\subsection{Partial twists in $SU(3)$}
\label{s:su3example}
We start with the spin chain with an $SU(3)$ global symmetry,  {\it i.e.,} it has the periodic boundary condition with $\theta_1 = \theta_2 = 0$. Its character is
\eq{
\chi(\mathfrak{su}(3)) = \frac{1}{(1-t_1)(1-t_2)(1-t_1 t_2)} \,.
}

Firstly, consider $\theta_1\ne 0$ with $\theta_2=0$, which corresponds to $D^+ = \{\alpha_2\}$. This implies the decomposition $\mathfrak{su}(3) \rightarrow \mathfrak{u}(1)\oplus \mathfrak{su}(2)$. The character of this partial twist is reduced to 
\eq{
\chi(\mathfrak{su}(3)) \longrightarrow \chi(\mathfrak{su}(2)) = \frac{1}{1-t_2}\,.
}
It follows from \eqref{corollary1} that
\eq{
    \mu^{\{\alpha_2\}}_{{\Lambda}}=c_{s,L}(M_1,M_2)-c_{s,L}(M_1,M_2-1)\,,
}
where ${\Lambda}=(\lambda^{(1)},\lambda^{(2)})$ with $\lambda^{(1)} = (2sL-M_1)$ and $\lambda^{(2)} = (M_1-M_2,M_2)$. See the upper-right corner of Figure~\ref{fig:su3youngdiagrams}. The dimension of the $ \mathfrak{u}(1) \oplus \mathfrak{su}(2)$ representation is
\eq{
    \dim R_\Lambda = \dim R_{\lambda^{(1)}}^{\mathfrak{u}(1)}\times  \dim R_{\lambda^{(2)}}^{\mathfrak{su}(2)} = M_1-2M_2 +1\,.
}
The completeness of the Hilbert space implies
\eq{
    \dim H=\sum_{\Lambda}\mu^{\{\alpha_2\}}_\Lambda  \dim R_\Lambda\,,
}
where the sum is over all admissible Young diagrams. 
For example, when $s=1$, $L=2$, it reads  
\EQ{
{\bf 6}\otimes {\bf 6} &= ({\bf 5}, 4) + 2 ({\bf 4}, 1) + 3({\bf 3}, -2) + ({\bf 3},4) + 2({\bf 2}, -5) + 2({\bf 2}, 1) \\
&+ ({\bf 1}, -8) + ({\bf 1}, -2) + ({\bf 1}, 4)\,,
}
where the $\mathfrak{u}(1)$ charge in the above equation is chosen to be $4-3\lambda^{(1)}$ to match with the convention in LieART. 
For $s=1/2$, one may prove completeness using binomial identities. Note that
\eq{
    \mu^{\{\alpha_2\}}_\Lambda = \binom{L}{M_1}\binom{M_1}{M_2} - \binom{L}{M_1}\binom{M_1}{M_2-1}\,.
}
It follows by the identity \eqref{su2complete} that
\eq{\sum_{M_1 = 0}^L \sum_{M_2 = 0}^{M_1/2} \mu^{\{\alpha_2\}}_\Lambda (M_1-M_2+1) = \sum_{M_1 = 0}^L {L \choose M_1} \, 2^{M_1} = 3^L  
\label{mu2}\,.}

Next, consider turning on $\theta_2\ne 0$ and keeping $\theta_1=0$. This corresponds to $D^+ = \{\alpha_1\}$.
The branching coefficient is 
\eq{\label{eq:su3ex1}
    \mu^{\{\alpha_1\}}_{{\Lambda}}=c_{s,L}(M_1,M_2)-c_{s,L}(M_1-1,M_2)\,,
}
where $\lambda^{(1)}=(2sL-M_1,M_1-M_2)$ and $\lambda^{(2)}=(M_2)$. See the upper-left corner of Figure~\ref{fig:su3youngdiagrams}.  The representation for $\Lambda$ has dimension
\eq{
    \dim R_{\Lambda} = 2sL-2M_1 +M_2 + 1\,.
}
For $s=1/2$, one can again prove the completeness of the Hilbert space. The branching coefficient reads 
\eq{
    \mu^{\{\alpha_1\}}_{{\Lambda}} = \binom{L}{M_1} \binom{M_1}{M_2} - \binom{L}{M_1-1}\binom{M_1-1}{M_2}\,.\label{mu1}
}
We may apply the identity
\eq{
{n \choose h}{n-h \choose k} = {n \choose k}{n-k \choose h} = {n \choose h+k} {h+k \choose h} \,,
\label{identity}
}
to rearrange \eqref{mu1} to the same form as \eqref{mu2}
\EQ{
&\sum_{M_2=0}^L \sum_{M_1 = 0}^{(L+M_2)/2} \mu^{\{\alpha_1\}}_\Lambda (L-2M_1 + M_2+1) \\ 
&=\sum_{M_2=0}^L \sum_{M_1 = 0}^{(L+M_2)/2}{L \choose M_2} \left[{L-M_2 \choose M_1-M_2} - {L-M_2 \choose M_1-M_2-1}\right] (L-2M_1 + M_2+1)  \\
&= 3^L \,.
}

\begin{figure}[!h]
\centering
\includegraphics[scale=0.8]{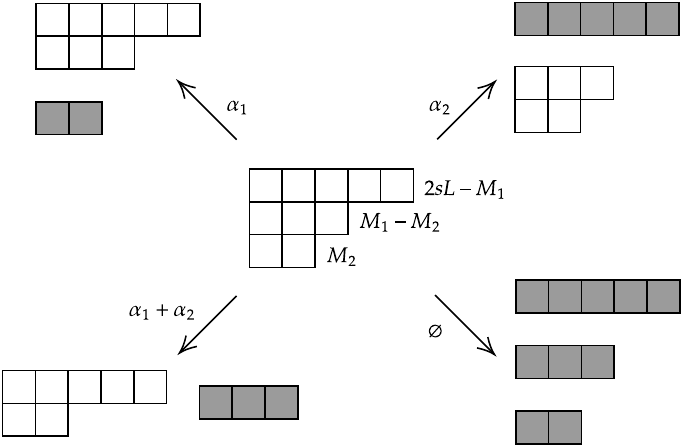}
\caption{An illustration of the partial twists for $\mathfrak{su}(3)$ and the corresponding Young diagrams from the original Young diagram.}
\label{fig:su3youngdiagrams}
\end{figure}

In the case of $\theta_1+\theta_2=0$, only the symmetry associated with the $\mathfrak{su}(2)$ subalgebra of $\alpha_1+\alpha_2$  remains unbroken, namely $D^+ = \{\alpha_1+\alpha_2\}$. We then have 
\eq{
    \mu^{\{\alpha_1+\alpha_2\}}_{{\Lambda}}=c_{s,L}(M_1,M_2)-c_{s,L}(M_1-1,M_2-1)\,,
}
where $\lambda^{(1)} = (2sL-M_1,M_2)$ and $\lambda^{(2)} = (M_1-M_2)$. See the lower-left corner of Figure~\ref{fig:su3youngdiagrams}. The dimension of the representation labeled by $\Lambda$ is
\eq{
    \dim R_{\Lambda}  = 2sL-M_1-M_2+1\,.
}
For $s=1/2$, 
\eq{
    \mu^{\{\alpha_1+\alpha_2\}}_\Lambda = \binom{L}{M_1}\binom{M_1}{M_2} - \binom{L}{M_1-1}\binom{M_1-1}{M_2-1}\,.
}
One can apply \eqref{mu1} with $M_2$ replaced by $M_2^\prime :=  M_1 - M_2$ to verify the completeness condition.

\subsection{Partial twists in $SU(4)$}
The partial twists in $SU(4)$ are even more interesting, as novel patterns of symmetry breaking may appear. Let us consider the example of $\theta_2\neq 0$ and $\theta_1=\theta_3=0$. This is equivalent to choosing $D^+ = \{\alpha_1,\alpha_3\}$ and specifying the following decomposition:
\eq{
    \Delta^+(\mathfrak{su}(4)) \longrightarrow \Delta^+(\mathfrak{su}(2)) \sqcup \Delta^+(\mathfrak{su}(2))\, .
}
The remaining non-Abelian symmetry of the spin chain is expected to be $SU(2)\times SU(2)$, whose character is the product of two $\mathfrak{su}(2)$ characters
\eq{
    \chi(\mathfrak{su}(2)\oplus\mathfrak{su}(2)) = \frac{1}{1-t_1} \frac{1}{1-t_3}\,.
}
The counting formula \eqref{corollary1} gives 
\EQ{
    \mu_{\Lambda}^{\{\alpha_1,\alpha_3\}}=\ & c_{s,L}(M_1,M_2,M_3)-c_{s,L}(M_1-1,M_2,M_3)-c_{s,L}(M_1,M_2,M_3-1)\\
    &+c_{s,L}(M_1-1,M_2,M_3-1)\,,
}
where $\lambda^{(1)}=(2sL-M_1,M_1-M_2)$ and $\lambda^{(2)}=(M_2-M_3,M_3)$. We note that the number of Cartan generators is always preserved after the symmetry breaking. Therefore in the process of $SU(4)$ broken to $SU(2) \times SU(2)$, one can find a linear combination of three Cartan generators in $\mathfrak{su}(4)$ to generate an independent $U(1)$ symmetry. Given two Young diagrams $\lambda^{(1)}$ and $\lambda^{(2)}$ labeling the representations of the two $SU(2)$'s, the $U(1)$ can be fixed as the difference between their sizes, $\lambda^{(1)}-\lambda^{(2)}$. 

\section{Generalizations}
\label{s:gen}

\subsection{Different spins at each site: Kondo-type models}

As the derivation of the formula \eqref{thm} does not require all the sites to have the same representation, one can modify the generating function defined in \eqref{def-genf} to
\EQ{
     \tilde{g}_{\vec{\lambda},L}(\vec x):= \prod_{n=1}^LS_{\lambda^{(n)}}\lt(1,x_1, \ldots, x_1 \cdots x_r\rt) \, \,,\label{def-ggen}
}
to define a modified coefficient $\tilde{c}_{\vec{\lambda},L}(\vec{M})$, and the multiplicity $\tilde{\mu}_\lambda$ is still expected to be computed from the same formula \eqref{thm}.

When all but one sites have the same spin $s$, one can view the site with a different spin $s'$ as an impurity and it gives a prototypical model of the Kondo effect \cite{PhysRevB.25.5935}. The number of solutions of such a twisted Kondo model can be worked out from the spin chain as 
\eq{
    c^{\rm K}_{s,s',L}(M)=\sum_{i=0}^{2s'}c_{s,L}(M-i)\,,
}
where we denoted the number of sites with spin $s$ as $L$, and set the spin of the impurity to be $s'-i$, then summed over all possibilities to obtain $c^{\rm K}_{s,s'}(M)$. One can easily confirm that $\tilde{c}_{\vec{\lambda}_{s,s'},L+1}(M)=c^{\rm K}_{s,s',L}(M)$ for an $(L+1)$-tuple Young diagram $\vec{\lambda}_{s,s'}:= \lt((2s'),(2s),(2s),\ldots,(2s)\rt)$, since $S_{(2s')}(\{1,x\})=\sum_{i=0}^{2s'}x^i$. It can be straightforwardly generalized to higher rank. By using 
\eq{
    S_{(2s')}(1,x_1, \ldots, x_1 \cdots x_r)=\sum_{i_1=0}^{2s'}\sum_{i_2=0}^{2s'-i_1}\cdots\sum_{i_r=0}^{2s'-\sum_{j=1}^{r-1}i_j}\prod_{k=1}^rx_k^{i_k}\,,
}
one obtains the number of solutions in higher-rank Kondo models as 
\eq{
    c^{\rm K}_{s,s',L}(\vec{M})=\sum_{i_1=0}^{2s'}\sum_{i_2=0}^{2s'-i_1}\cdots\sum_{i_r=0}^{2s'-\sum_{j=1}^{r-1}i_j}c_{s,L}(\vec{M}-\vec{i})\,,
}
where $\vec{i}:= (i_1,i_2,\ldots,i_r)$. Similar combinatorial problems have been studied in \cite{Dafnis:2007}.

\subsection{Extension to Lie superalgebras}\label{s:super}

The counting formula \eqref{thm} does not rely on any specific structure of Lie algebras of $A$ type, we expect that it can also be applied to count the physical solutions in spin chains of other types of Lie algebras or even Lie superalgebras. 

In the cases of Lie superalgebras, we replace the character $\chi(\mathfrak{g})$ in \eqref{thm} with that of the Lie superalgebra studied in \cite{kac1977characters}. The positive roots $\Delta^{+}$ consists of two parts
\eq{
    \Delta^+ = \Delta_0^+ \cup \Delta_1^+\,,
}
where $\Delta_0^+$ is the set of positive even roots and $\Delta_1^+$ is the set of positive odd roots. For $\mathfrak{g} = \mathfrak{sl}(m|n)$, the positive roots can be written in the standard basis \cite{frappat:hal-00376660},
\eq{
\begin{aligned}
    & \Delta_0^+ = \left\{e_i - e_j | i<j,\ i,j = 1,\ldots,m \right\} \cup \left\{f_k - f_\ell | k<\ell,\ k,\ell = 1,\ldots,n  \right\}\, , \\
    & \Delta_1^+ = \left\{e_i - f_k | i = 1,\ldots,m,\ k = 1,\ldots, n \right\}\,.
\end{aligned}
}
An observation is that $\Delta_1^+$ can be generated by one element of $\Delta_1^+$, e.g., $e_m-f_1$, by its linear compositions with the elements in $\Delta_0^+$. Therefore, we will choose the simple roots of $\mathfrak{sl}(m|n)$ as follows, which is also called the {\it distinguished simple root system},
\eq{
\alpha_i = e_i - e_{i+1}\,,  \quad \beta_\ell = f_\ell - f_{\ell+1}\,, \quad  \delta = e_{m} - f_1\,,
}
for $i=1,\ldots,m-1$ and $\ell = 1,\ldots, n-1$. The character of the Lie superalgebra $\mathfrak{sl}(m|n)$ is\footnote{There is another character for Lie superalgebras, called the {\it supercharacter}  \cite{kac1977characters}, defined as
$$ {\rm sch}(\mathfrak{g}) = \frac{\prod_{\alpha\in \Delta_1^+} (1-t^\alpha)}{\prod_{\alpha\in \Delta_0^+} (1-t^\alpha)}\,. $$
}
\eq{\label{eq:chsuper}
\chi(\mathfrak{sl}(m|n)) = \frac{\prod_{\alpha\in \Delta_1^+} (1+t^\alpha)}{\prod_{\alpha\in \Delta_0^+} (1-t^\alpha)}\,,
}
where $t^\alpha :=  \big(\prod_i t_i^{m_i }\big) \big(\prod_\ell t_{\ell+m-1}^{n_{\ell}}\big) t_{m+n-1}^{k}$ for some (even or odd) positive root $\alpha = \sum_i m_i \alpha_i + \sum_\ell n_\ell \beta_\ell + k \delta$ and $t_I$ with $I=1,\ldots, m+n-1$ are formal variables associated with each simple root. 

We conjecture that one can follow the definition of the shift operator \eqref{shift} and generalize the counting formula to the case of a Lie superalgebra $\mathfrak{sl}(m|n)$. In general, $\chi(\mathfrak{sl}(m|n))^{-1}$ is not a polynomial. Therefore, one should use its formal Taylor expansion in $t_i$'s to define the corresponding shift operator. We illustrate this by two simplest examples, $\mathfrak{sl}(1|1)$ and $\mathfrak{sl}(1|2)$, in the remainder of this section.

\subsubsection{$\mathfrak{sl}(1|1)$}
The simplest case, $\mathfrak{sl}(1|1)$, describes the XX chain with a magnetic field \cite{Saleur:1989nw}. Counting solutions in the twisted case is also described by the restricted-occupancy problem shown in Figure~\ref{fig: 2d}. Therefore, we will use the same notation $c_{s,L}(M)$ as in Section~\ref{s:counting}.

There is only one positive root which is odd,
\eq{
    \Delta^+(\mathfrak{sl}(1|1)) = \{e - f\}\,,
}
and the character is 
\eq{
\chi\lt(\mathfrak{sl}(1|1)\rt) = 1+t\,.    
}
The shift operator associated with its reciprocal can also be defined by a slight generalization of \eqref{shift}. We note that $1/(1+t)$ can be formally expanded in $t$ as $1/(1+t) = \sum_{i}(-1)^i t^i$, and the shift operator for each monomial in this expansion just follows \eqref{shift}. We then define
\eq{
\mathcal{D}_{\chi\lt(\mathfrak{sl}(1|1)\rt)^{-1}} :=  \sum_{i=0}^\infty (-1)^i \mathcal{D}_{t^i}\,.
}
Therefore we find the multiplicity 
\EQ{
    \mu_{\lambda}
    =\sum _{i=0}^{\infty} (-1)^i c_{s,L}(M-i)=(-1)^M\sum_{i=0}^M(-1)^i c_{s,L}(i)\,.
\label{conj-sl11}
}
For $s=1/2$, one can prove by induction that
\eq{
    \mu_{\lambda} = {L-1 \choose M}=\frac{(L-1)!}{M!(L-1-M)!} \,.\label{d-sl11h}
}
We observe that the multiplicity can also be read from the hook-length formula 
\eqref{eq:hook} of a Young diagram $\lambda=(L-M,1^M)$. See Figure~\ref{fig: t-JYoung}. 
One can verify the completeness of the Hilbert space as
\eq{
\sum_{M=0}^{L-1} \mu_{\lambda} \times 2 = 2^L \,.
}
Note that $M$ is allowed to take values from $0$ to $L-1$. 

\begin{figure}
        \centering
        \includegraphics[scale=0.85]{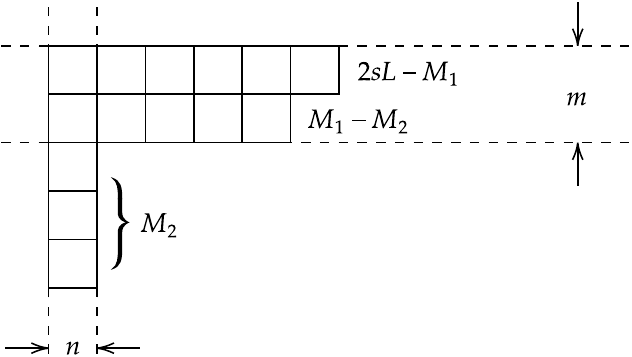}
        \caption{The highest-weight states for $\mathfrak{su}(m|n)$ correspond to Young diagrams that fit inside the union of a horizontal strip of width $m$ and a vertical strip of width $n$ \cite{Bars:1982se}.}
        \label{fig: t-JYoung}
\end{figure}

\subsubsection{$\mathfrak{sl}(1|2)$}
An $\mathfrak{sl}(1|2)$ symmetry underlies the one-dimensional t-J model \cite{Sarkar_1990}, The model describes spin-hopping interaction of $M_1$ electrons in an $L$-site lattice with $M_2$ spin-down excitations \cite{Chao_1977}. The number of states in the twisted spin-$s$ generalization is \cite{Shu:2022vpk}
\eq{
    c^{\rm tJ}_{s,L}(M_1, M_2) = \binom{L}{M_1}c_{s,M_1}(M_2)\,.\label{c-t-J}
}
The positive roots are written in the standard basis as
\eq{\label{eq:sl12roots}
    \Delta_0^+ = \{f_1 - f_2\}\, \,,\quad  \Delta_1^+ = \{e - f_1, e - f_2\}\,.
}
Therefore, the character is given by
\eq{\label{eq:sl12character}
    \chi\lt(\mathfrak{sl}(1|2)\rt) = \frac{\left(1 + t_2\right) \left(1 + t_1 t_2\right)}{1-t_1} \,.
}
In analogy with \eqref{thm}, we have 
\eq{
    \mu_{\lambda} = 
    \sum _{i=0}^{\infty} (-1)^i \lt[c^{\rm tJ}_{s,L}(M_1,M_2-i)-c^{\rm tJ}_{s,L}(M_1-i-1,M_2-i)\rt]\,.
}
For $s=1/2$, we may apply the binomial identity \eqref{identity} and a similar summation trick as in \eqref{d-sl11h} to obtain 
\EQ{
    \mu_{\lambda} &= \sum_{i=0}^\infty (-1)^i\lt[\binom{L}{M_1}\binom{M_1}{M_2-i}-\binom{L}{M_1-M_2-1}\binom{L-M_1+M_2+1}{M_2-i}\rt]\\
    &=\binom{L}{M_1}\binom{M_1-1}{M_2}-\binom{L}{M_1-M_2-1}\binom{L-M_1+M_2}{M_2} \\
&= \frac{L! (L-2 M_1+M_2+1)}{M_1 M_2! (L-M_1)! (M_1-M_2-1)! (L-M_1+M_2+1)} \,.
\label{eq:tJmu}
}
This again agrees with the hook-length formula \eqref{eq:hook} applied to the Young diagram $\lambda=(L-M_1,M_1-M_2,1^{M_2})$. It is the same as $Z(M_1,M_2)$ defined by \cite[Equation~(15)]{Foerster:1992ud}:
\eq{
\label{eq:tJmuidentity1}
\begin{aligned}
    \mu_{\lambda} &= \frac{L-2M_1+M_2+1}{L-M_1+M_2+1} \binom{L}{M_1}\binom{M_1-1}{M_2}\\
	& = \frac{L-2M_1+M_2+1}{L-M_1+M_2+1}\binom{M_1-1}{M_2} \sum_{q=0}^{M_1} \binom{L+M_2 -M_1 +1}{q}\binom{M_1 - M_2 - 1}{q - M_2-1}
    \\
	&=\frac{L-2M_1+M_2+1}{L-M_1+M_2+1}\sum_{q=0}^{M_1} \binom{L+M_2 - M_1 + 1}{q } \binom{M_1 - 1}{q-1}\binom{q-1}{M_2}
    \, , 
\end{aligned}
}
where we applied the Chu-Vandermonde identity
\eq{
    \sum_{j=0}^{k} \binom{m}{j}\binom{n-m}{k-j} = \binom{n}{k}\,,
}
and the identity \eqref{identity}.
The dimension of the representation labeled by such a Young diagram is given by \cite{Foerster:1992ud}
\eq{
    \dim R_\lambda^{\mathfrak{sl}(1|2)}=\lt\{\begin{array}{cc}
    2L+1 & M_1=M_2=0\\
    4(L-2M_1+M_2+1) & {\rm otherwise}
    \end{array}\rt.\,.
}
One can follow \cite[section 6]{FOERSTER1993611} and check the completeness as
\eq{
\sum_{\lambda}\mu_{\lambda} \dim R_\lambda^{\mathfrak{sl}(1|2)} = 3^L\,. 
}

\section{Discussions}

In this paper, motivated by the number of physical Bethe states in integrable models with twisted boundary conditions, we derived a counting formula for the tensor-product multiplicities for the composition of $L$ spins. We focused on the symmetry-breaking pattern of the closed spin chains with an $SU(r+1)$ symmetry by imposing a twisted boundary condition. When all the twist angles are generic and non-vanishing, the symmetry of the spin chain is completely broken to $U(1)^r$. The number of states with a given charge can be found from the generating functions that are products of Schur polynomials. In particular, when we take a $2s$-symmetric representation at each spin site, the counting problem is equivalent to a 3d restricted-occupancy problem. When the symmetry is fully restored with all the twist angles set to zero, the counting formula relates the tensor-product multiplicities $\mu_{\lambda}$ to the restricted-occupancy coefficients $c_{s,L}(\vec{M})$. For $s=1/2$, we proved that the proposed formula reproduces the well-known hook-length formula for tensor-product multiplicities. We further showed that the multiplicity in partially twisted models can also be computed from $c_{s,L}(\vec{M})$ by modifying the Verma-module character accordingly. For $SU(3)$, $s=1/2$, we proved the completeness of the Hilbert space. The counting formula is expected to hold for all Lie algebras, and we also examined its applicability to Lie superalgebras. Interestingly, for $s=1/2$ our formula also reduces to the hook-length formula for the Lie superalgebras  $\mathfrak{sl}(1|1)$ and $\mathfrak{sl}(1|2)$. 

Results for the untwisted and partially twisted cases are obtained purely from representation theory, and it remains a non-trivial task to reproduce the multiplicities from the number of physical solutions to BAEs of higher-rank spin chains, Kondo models, and t-J models. For $s\geq 1$ and also in higher-rank models, there are not only singular physical solutions but also repeated and singular repeated solutions \cite{Hao:2013rza}, whose presence requires imposing further physical conditions on the untwisted or partially twisted systems. It is generally complicated to solve the BAEs directly with additional (frequently not fully understood) physical constraints to identify all such physical solutions.  It is now widely believed that the BAEs contain all the eigenstates and some extra non-physical solutions \cite{Hao:2013jqa,Nepomechie:2013mua,Nepomechie:2014hma}.  As an interesting recent discovery, the rational $Q$ system \cite{Marboe:2016yyn}, designed to solve the BAEs efficiently, turned out to find precisely the physical Bethe roots. The physical conditions and also the $Q$ system become much more complicated for higher-spin and higher-rank spin chains (see e.g., \cite{Hao:2013rza,Hou:2023ndn}). The first step towards a complete algorithm to solve for only physical states in general spin chains is to clarify the number of physical solutions associated with different magnon charges. It is expected that the representation-theoretic structure can be encoded in the $Q$ system and the Hirota-equation approaches \cite{Krichever:1996qd,Kazakov:2015efa} to the BAEs. We hope that the counting formula developed in this paper will guide the development of a complete algorithm to solve the BAEs and the study of supersymmetric gauge theories on singular loci.

In this paper, we focused on the connection between the counting formula and the closed spin chains. Since the idea is purely algebraic, we expect similar arguments to be applicable even to open spin chains, whose BAEs with diagonal and off-diagonal boundaries can be derived following the techniques developed in \cite{Sklyanin:1988yz,Cao:2013nza}. The Bethe/Gauge correspondence for open spin chains has been formulated at the level of equations,  {\it i.e.,} the map between the BAEs and the vacua equations in the gauge theory has been constructed \cite{Kimura:2020bed,Ding:2023auy,Ding:2023lsk,Ding:2023nkv,Wang:2024zcr}. One of the problems in this correspondence is that there are different ways to realize the vacua equations of the same gauge theory. It is therefore important to look into the details of the solutions and realize the Seiberg-like dualities of SO- and Sp-type gauge theories explored in \cite{Hori:2011pd,Kim:2017zis} in the dual open spin chains. The first step to establishing the duality is to match the number of physical solutions in the dual models, and it is desirable to generalize the present results to spin chains with integrable open boundaries in the future. 

\paragraph{Acknowledgement}

We would like to extend our heartfelt thanks to Prof. Wenli Yang for providing us with invaluable advice and guidance. We thank Zhanqiang Bai, Kun Hao, Jue Hou, Yunfeng Jiang, Taro Kimura, J. Michael Kosterlitz, Zhangcheng Liu, Andrei S. Losev for inspiring discussions on relevant topics. We also want to thank the support from the workshop ``Tianfu Fields and Strings 2024,'' where the work is completed.
H.S. is supported by the National Natural Science Foundation of China No.12405087 and the Startup Funding of Zhengzhou University (Grant No.121-35220049). P.Z. thanks the 2024 Summer Program of the Simons Center for Geometry and Physics for kind hospitality, where part of this work was completed. R.Z. is supported by the National Natural Science Foundation of China No. 12105198 and the High-level personnel project of Jiangsu Province (JSSCBS20210709). H.~Z. is partly supported by the National Natural Science Foundation of China (Grant No.~12405083, 12475005), Shanghai Magnolia Talent Program  Pujiang Project (Grant No.~24PJA119), and was supported by the China Postdoctoral Science Foundation (Grant No.~2022M720509). H.~Z. thanks the hospitality of Xin Hua Hospital affiliated to School of Medicine, Shanghai Jiao Tong University, where part of this work was done when his arm was broken.

\bibliography{ref}

\providecommand{\href}[2]{#2}\begingroup\raggedright\begin{thebibliography}{10}

\bibitem{Kulish_2012}
P.~P. Kulish, V.~D. Lyakhovsky, and O.~V. Postnova, ``{Multiplicity functions
  for tensor powers. $A_n$-case"},''
  \href{http://dx.doi.org/10.1088/1742-6596/343/1/012070}{{\em Journal of
  Physics: Conference Series} {\bfseries 343} no.~1, (Feb, 2012) 012070}.
  \url{https://dx.doi.org/10.1088/1742-6596/343/1/012070}.

\bibitem{Curtright:2016eni}
T.~Curtright, T.~S. Van~Kortryk, and C.~K. Zachos, ``{Spin multiplicities},''
  \href{http://dx.doi.org/10.1016/j.physleta.2016.12.006}{{\em Phys. Lett. A}
  {\bfseries 381} (2017) 422--427},
  \href{http://arxiv.org/abs/1607.05849}{{\ttfamily arXiv:1607.05849
  [hep-th]}}.

\bibitem{Gyamfi_2018}
J.~A. Gyamfi and V.~Barone, ``{On the composition of an arbitrary collection of
  $SU(2)$ spins: an enumerative combinatoric approach},''
  \href{http://dx.doi.org/10.1088/1751-8121/aaa8fa}{{\em Journal of Physics A:
  Mathematical and Theoretical} {\bfseries 51} no.~10, (Feb, 2018) 105202}.
  \url{https://dx.doi.org/10.1088/1751-8121/aaa8fa}.

\bibitem{postnova2020multiplicities}
O.~Postnova and N.~Reshetikhin, ``{On multiplicities of irreducibles in large
  tensor product of representations of simple Lie algebras},'' {\em Letters in
  Mathematical Physics} {\bfseries 110} no.~1, (2020) 147--178.

\bibitem{Polychronakos:2023yhq}
A.~P. Polychronakos and K.~Sfetsos, ``{Composing arbitrarily many SU($N$)
  fundamentals},''
  \href{http://dx.doi.org/10.1016/j.nuclphysb.2023.116314}{{\em Nucl. Phys. B}
  {\bfseries 994} (2023) 116314},
  \href{http://arxiv.org/abs/2305.19345}{{\ttfamily arXiv:2305.19345
  [hep-th]}}.

\bibitem{Heitler}
W.~Heitler, ``{Zur Gruppentheorie der hom{\"o}opolaren chemischen Bindung},''
  {\em Zeitschrift f{\"u}r Physik} {\bfseries 47} no.~11, (1928) 835--858.

\bibitem{Heisenberg}
W.~Heisenberg, ``{Zur Theorie des Ferromagnetismus},'' {\em Zeitschrift f{\"u}r
  Physik} {\bfseries 49} no.~9, (1928) 619--636.

\bibitem{Bethe:1931hc}
H.~Bethe, ``{On the theory of metals. 1. Eigenvalues and eigenfunctions for the
  linear atomic chain},'' \href{http://dx.doi.org/10.1007/BF01341708}{{\em Z.
  Phys.} {\bfseries 71} (1931) 205--226}.

\bibitem{Kirillov1985}
A.~N. Kirillov, ``{Combinatorial identities, and completeness of eigenstates of
  the Heisenberg magnet},'' \href{http://dx.doi.org/10.1007/BF02105347}{{\em
  Journal of Soviet Mathematics} {\bfseries 30} no.~4, (1985) 2298--2310}.
  \url{https://doi.org/10.1007/BF02105347}.

\bibitem{Yang:1967ue}
C.~N. Yang, ``Some exact results for the many-body problem in one dimension
  with repulsive delta-function interaction,''
  \href{http://dx.doi.org/10.1103/PhysRevLett.19.1312}{{\em Physical Review
  Letters} {\bfseries 19} no.~23, (1967) 1312--1315}.
  \url{https://doi.org/10.1103/PhysRevLett.19.1312}.

\bibitem{Sutherland:1975vr}
B.~Sutherland, ``{Model for a multicomponent quantum system},''
  \href{http://dx.doi.org/10.1103/PhysRevB.12.3795}{{\em Phys. Rev. B}
  {\bfseries 12} (Nov, 1975) 3795--3805}.
  \url{https://link.aps.org/doi/10.1103/PhysRevB.12.3795}.

\bibitem{Faddeev:1996iy}
L.~D. Faddeev, ``{How algebraic Bethe ansatz works for integrable model},'' in
  {\em {Les Houches School of Physics: Astrophysical Sources of Gravitational
  Radiation}}, pp.~149--219.
\newblock 5, 1996.
\newblock \href{http://arxiv.org/abs/hep-th/9605187}{{\ttfamily
  arXiv:hep-th/9605187}}.

\bibitem{Byers:1961zz}
N.~Byers and C.~N. Yang, ``{Theoretical Considerations Concerning Quantized
  Magnetic Flux in Superconducting Cylinders},''
  \href{http://dx.doi.org/10.1103/PhysRevLett.7.46}{{\em Phys. Rev. Lett.}
  {\bfseries 7} (1961) 46--49}.

\bibitem{Bazhanov:2010ts}
V.~V. Bazhanov, T.~Lukowski, C.~Meneghelli, and M.~Staudacher, ``{A Shortcut to
  the Q-Operator},''
  \href{http://dx.doi.org/10.1088/1742-5468/2010/11/P11002}{{\em J. Stat.
  Mech.} {\bfseries 1011} (2010) P11002},
  \href{http://arxiv.org/abs/1005.3261}{{\ttfamily arXiv:1005.3261 [hep-th]}}.

\bibitem{Hou:2023ndn}
J.~Hou, Y.~Jiang, and R.-D. Zhu, ``{Spin-$s$ rational $Q$-system},''
  \href{http://dx.doi.org/10.21468/SciPostPhys.16.4.113}{{\em SciPost Phys.}
  {\bfseries 16} no.~4, (2024) 113},
  \href{http://arxiv.org/abs/2303.07640}{{\ttfamily arXiv:2303.07640
  [hep-th]}}.

\bibitem{Kirillov_1987}
A.~N. Kirillov, ``{Completeness of states of the generalized Heisenberg
  magnet},'' \href{http://dx.doi.org/10.1007/bf01104977}{{\em Journal of Soviet
  Mathematics} {\bfseries 36} no.~1, (Jan, 1987) 115--128}.
  \url{https://doi.org/10.1007%2Fbf01104977}.

\bibitem{Nekrasov:2009uh}
N.~A. Nekrasov and S.~L. Shatashvili, ``{Supersymmetric vacua and Bethe
  ansatz},'' \href{http://dx.doi.org/10.1016/j.nuclphysbps.2009.07.047}{{\em
  Nucl. Phys. B Proc. Suppl.} {\bfseries 192-193} (2009) 91--112},
  \href{http://arxiv.org/abs/0901.4744}{{\ttfamily arXiv:0901.4744 [hep-th]}}.

\bibitem{Nekrasov:2009ui}
N.~A. Nekrasov and S.~L. Shatashvili, ``{Quantum integrability and
  supersymmetric vacua},'' \href{http://dx.doi.org/10.1143/PTPS.177.105}{{\em
  Prog.Theor.Phys.Suppl.} {\bfseries 177} (2009) 105--119},
  \href{http://arxiv.org/abs/0901.4748}{{\ttfamily arXiv:0901.4748 [hep-th]}}.
21 pp., short version II, conference in honour of T.Eguchi's 60th anniversary.

\bibitem{Shu:2022vpk}
H.~Shu, P.~Zhao, R.-D. Zhu, and H.~Zou, ``{Bethe-state counting and Witten
  index},'' \href{http://dx.doi.org/10.21468/SciPostPhys.15.3.103}{{\em SciPost
  Phys.} {\bfseries 15} no.~3, (2023) 103},
  \href{http://arxiv.org/abs/2210.07116}{{\ttfamily arXiv:2210.07116
  [hep-th]}}.

\bibitem{Witten:1993yc}
E.~Witten, ``{Phases of $N=2$ theories in two-dimensions},''
  \href{http://dx.doi.org/10.1016/0550-3213(93)90033-L}{{\em Nucl. Phys. B}
  {\bfseries 403} (1993) 159--222},
  \href{http://arxiv.org/abs/hep-th/9301042}{{\ttfamily arXiv:hep-th/9301042}}.

\bibitem{Nepomechie:2013mua}
R.~I. Nepomechie and C.~Wang, ``{Algebraic Bethe ansatz for singular
  solutions},'' \href{http://dx.doi.org/10.1088/1751-8113/46/32/325002}{{\em J.
  Phys. A} {\bfseries 46} (2013) 325002},
  \href{http://arxiv.org/abs/1304.7978}{{\ttfamily arXiv:1304.7978 [hep-th]}}.

\bibitem{Nepomechie:2014hma}
R.~I. Nepomechie and C.~Wang, ``{Twisting singular solutions of Bethe's
  equations},'' \href{http://dx.doi.org/10.1088/1751-8113/47/50/505004}{{\em J.
  Phys. A} {\bfseries 47} no.~50, (2014) 505004},
  \href{http://arxiv.org/abs/1409.7382}{{\ttfamily arXiv:1409.7382 [math-ph]}}.

\bibitem{Kulish:1979cr}
P.~P. Kulish and N.~Y. Reshetikhin, ``{Generalized Heisenberg ferromagnet and
  the Gross-Neveu model},'' {\em Sov. Phys. JETP} {\bfseries 53} (1981)
  108--114.

\bibitem{Kulish:1983rd}
P.~P. Kulish and N.~Y. Reshetikhin, ``{Diagonalisation of GL($N$) invariant
  transfer matrices and quantum $N$-wave system (Lee model)},''
  \href{http://dx.doi.org/10.1088/0305-4470/16/16/001}{{\em J. Phys. A}
  {\bfseries 16} (1983) L591--L596}.

\bibitem{Slavnov:2019hdn}
N.~A. Slavnov, ``{Introduction to the nested algebraic Bethe ansatz},''
  \href{http://dx.doi.org/10.21468/SciPostPhysLectNotes.19}{{\em SciPost Phys.
  Lect. Notes} {\bfseries 19} (2020) 1},
  \href{http://arxiv.org/abs/1911.12811}{{\ttfamily arXiv:1911.12811
  [math-ph]}}.

\bibitem{ACDFR05}
D.~{Arnaudon}, N.~{Cramp{\'e}}, A.~{Doikou}, L.~{Frappat}, and
  {\'E}.~{Ragoucy}, ``{Analytical Bethe ansatz for closed and open
  gl($\mathcal{N}$)-spin chains in any representation},''
  \href{http://dx.doi.org/10.1088/1742-5468/2005/02/P02007}{{\em Journal of
  Statistical Mechanics: Theory and Experiment} {\bfseries 2005} no.~2, (Feb.,
  2005) 02007}, \href{http://arxiv.org/abs/math-ph/0411021}{{\ttfamily
  arXiv:math-ph/0411021 [math-ph]}}.

\bibitem{Freund:1956ro}
J.~E. Freund and A.~N. Pozner, ``{Some Results on Restricted Occupancy
  Theory},'' \href{http://dx.doi.org/10.1214/aoms/1177728278}{{\em The Annals
  of Mathematical Statistics} {\bfseries 27} no.~2, (1956) 537 -- 540}.
  \url{https://doi.org/10.1214/aoms/1177728278}.

\bibitem{Ramond_2010}
P.~Ramond, {\em Group Theory: A Physicist’s Survey}.
\newblock Cambridge University Press, 2010.

\bibitem{SZZZ}
H.~Shu, P.~Zhao, R.-D. Zhu, and H.~Zou {\em In Progress} .

\bibitem{fulton1991representation}
W.~Fulton and J.~Harris, {\em Representation Theory: A First Course}.
\newblock Graduate Texts in Mathematics. Springer New York, 1991.
\newblock \url{https://books.google.co.jp/books?id=6GUH8ARxhp8C}.

\bibitem{PhysRevB.25.5935}
K.~Furuya and J.~H. Lowenstein, ``{Bethe-ansatz approach to the Kondo model
  with arbitrary impurity spin},''
  \href{http://dx.doi.org/10.1103/PhysRevB.25.5935}{{\em Phys. Rev. B}
  {\bfseries 25} (May, 1982) 5935--5952}.
  \url{https://link.aps.org/doi/10.1103/PhysRevB.25.5935}.

\bibitem{Dafnis:2007}
S.~D. Dafnis, F.~S. Makri, and A.~N. Philippou, ``{Restricted occupancy of $s$
  kinds of cells and generalized Pascal triangles},'' {\em Fibonacci Quart.}
  {\bfseries 45} (2007) 347--356.

\bibitem{kac1977characters}
V.~Kac, ``{Characters of typical representations of classical Lie
  superalgebras},'' {\em Communications in Algebra} {\bfseries 5} no.~8, (1977)
  889--897.

\bibitem{frappat:hal-00376660}
L.~Frappat, P.~Sorba, and A.~Sciarrino, {\em {Dictionary on Lie algebras and
  superalgebras}}.
\newblock {Academic Press (London)}, 2000.
\newblock \url{https://hal.science/hal-00376660}.

\bibitem{Saleur:1989nw}
H.~Saleur, ``{Symmetries of the XX chain and applications},'' in {\em {Trieste
  Conference on Recent Developments in Conformal Field Theories}}.
\newblock 1989.

\bibitem{Bars:1982se}
I.~Bars, B.~Morel, and H.~Ruegg, ``{Kac-dynkin Diagrams and Supertableaux},''
  \href{http://dx.doi.org/10.1063/1.525970}{{\em J. Math. Phys.} {\bfseries 24}
  (1983) 2253}.

\bibitem{Sarkar_1990}
S.~Sarkar, ``{Bethe-ansatz solution of the t-J model},''
  \href{http://dx.doi.org/10.1088/0305-4470/23/9/002}{{\em Journal of Physics
  A: Mathematical and General} {\bfseries 23} no.~9, (May, 1990) L409--L414}.
  \url{https://doi.org/10.1088/0305-4470/23/9/002}.

\bibitem{Chao_1977}
K.~A. Chao, J.~Spalek, and A.~M. Oles, ``Kinetic exchange interaction in a
  narrow s-band,'' \href{http://dx.doi.org/10.1088/0022-3719/10/10/002}{{\em
  Journal of Physics C: Solid State Physics} {\bfseries 10} no.~10, (May, 1977)
  L271--L276}. \url{https://doi.org/10.1088%2F0022-3719%2F10%2F10%2F002}.

\bibitem{Foerster:1992ud}
A.~Foerster and M.~Karowski, ``{Completeness of the Bethe states for the
  supersymmetric t-J model},''
  \href{http://dx.doi.org/10.1103/PhysRevB.46.9234}{{\em Phys. Rev. B}
  {\bfseries 46} (Oct, 1992) 9234--9236}.
  \url{https://link.aps.org/doi/10.1103/PhysRevB.46.9234}.

\bibitem{FOERSTER1993611}
A.~Foerster and M.~Karowski, ``{Algebraic properties of the Bethe ansatz for an
  $spl(2,1)$-supersymmetric t-J model},''
  \href{http://dx.doi.org/https://doi.org/10.1016/0550-3213(93)90665-C}{{\em
  Nuclear Physics B} {\bfseries 396} no.~2, (1993) 611--638}.
  \url{https://www.sciencedirect.com/science/article/pii/055032139390665C}.

\bibitem{Hao:2013rza}
W.~Hao, R.~I. Nepomechie, and A.~J. Sommese, ``{Singular solutions, repeated
  roots and completeness for higher-spin chains},''
  \href{http://dx.doi.org/10.1088/1742-5468/2014/03/P03024}{{\em J. Stat.
  Mech.} {\bfseries 1403} (2014) P03024},
  \href{http://arxiv.org/abs/1312.2982}{{\ttfamily arXiv:1312.2982 [math-ph]}}.

\bibitem{Hao:2013jqa}
W.~Hao, R.~I. Nepomechie, and A.~J. Sommese, ``{Completeness of solutions of
  Bethe's equations},''
  \href{http://dx.doi.org/10.1103/PhysRevE.88.052113}{{\em Phys. Rev. E}
  {\bfseries 88} no.~5, (2013) 052113},
  \href{http://arxiv.org/abs/1308.4645}{{\ttfamily arXiv:1308.4645 [math-ph]}}.

\bibitem{Marboe:2016yyn}
C.~Marboe and D.~Volin, ``{Fast analytic solver of rational Bethe equations},''
  \href{http://dx.doi.org/10.1088/1751-8121/aa6b88}{{\em J. Phys. A} {\bfseries
  50} no.~20, (2017) 204002}, \href{http://arxiv.org/abs/1608.06504}{{\ttfamily
  arXiv:1608.06504 [math-ph]}}.

\bibitem{Krichever:1996qd}
I.~Krichever, O.~Lipan, P.~Wiegmann, and A.~Zabrodin, ``{Quantum integrable
  systems and elliptic solutions of classical discrete nonlinear equations},''
  \href{http://dx.doi.org/10.1007/s002200050165}{{\em Commun. Math. Phys.}
  {\bfseries 188} (1997) 267--304},
  \href{http://arxiv.org/abs/hep-th/9604080}{{\ttfamily arXiv:hep-th/9604080}}.

\bibitem{Kazakov:2015efa}
V.~Kazakov, S.~Leurent, and D.~Volin, ``{T-system on T-hook: Grassmannian
  Solution and Twisted Quantum Spectral Curve},''
  \href{http://dx.doi.org/10.1007/JHEP12(2016)044}{{\em JHEP} {\bfseries 12}
  (2016) 044}, \href{http://arxiv.org/abs/1510.02100}{{\ttfamily
  arXiv:1510.02100 [hep-th]}}.

\bibitem{Sklyanin:1988yz}
E.~K. Sklyanin, ``{Boundary Conditions for Integrable Quantum Systems},''
  \href{http://dx.doi.org/10.1088/0305-4470/21/10/015}{{\em J. Phys. A}
  {\bfseries 21} (1988) 2375--2389}.

\bibitem{Cao:2013nza}
J.~Cao, W.~Yang, K.~Shi, and Y.~Wang, ``{Off-diagonal Bethe ansatz and exact
  solution of a topological spin ring},''
  \href{http://dx.doi.org/10.1103/PhysRevLett.111.137201}{{\em Phys. Rev.
  Lett.} {\bfseries 111} no.~13, (2013) 137201},
  \href{http://arxiv.org/abs/1305.7328}{{\ttfamily arXiv:1305.7328
  [cond-mat.stat-mech]}}.

\bibitem{Kimura:2020bed}
T.~Kimura and R.-D. Zhu, ``{Bethe/Gauge Correspondence for SO/Sp Gauge Theories
  and Open Spin Chains},''
  \href{http://dx.doi.org/10.1007/JHEP03(2021)227}{{\em JHEP} {\bfseries 03}
  (2021) 227}, \href{http://arxiv.org/abs/2012.14197}{{\ttfamily
  arXiv:2012.14197 [hep-th]}}.

\bibitem{Ding:2023auy}
X.-M. Ding and T.~Zhang, ``{Bethe/Gauge correspondence for ABCDEFG-type 3d
  gauge theories},'' \href{http://dx.doi.org/10.1007/JHEP04(2023)036}{{\em
  JHEP} {\bfseries 04} (2023) 036},
  \href{http://arxiv.org/abs/2303.03102}{{\ttfamily arXiv:2303.03102
  [hep-th]}}. [Erratum: JHEP 06, 177 (2023)].

\bibitem{Ding:2023lsk}
X.-M. Ding and T.~Zhang, ``{Bethe/Gauge correspondence for linear quiver
  theories with ABCD gauge symmetry and spin chains},''
  \href{http://dx.doi.org/10.1016/j.nuclphysb.2023.116222}{{\em Nucl. Phys. B}
  {\bfseries 991} (2023) 116222},
  \href{http://arxiv.org/abs/2303.04575}{{\ttfamily arXiv:2303.04575
  [hep-th]}}.

\bibitem{Ding:2023nkv}
X.-M. Ding and T.~Zhang, ``{Langlands Dualities through Bethe/Gauge
  Correspondence for 3d Gauge Theories},''
  \href{http://arxiv.org/abs/2312.13080}{{\ttfamily arXiv:2312.13080
  [math-ph]}}.

\bibitem{Wang:2024zcr}
Z.~Wang and R.-D. Zhu, ``{Bethe/Gauge correspondence for $A_{N}$ spin chains
  with integrable boundaries},''
  \href{http://dx.doi.org/10.1007/JHEP04(2024)112}{{\em JHEP} {\bfseries 04}
  (2024) 112}, \href{http://arxiv.org/abs/2401.00764}{{\ttfamily
  arXiv:2401.00764 [hep-th]}}.

\bibitem{Hori:2011pd}
K.~Hori, ``{Duality In Two-Dimensional $\mathcal{N} = (2,2)$ Supersymmetric
  Non-Abelian Gauge Theories},''
  \href{http://dx.doi.org/10.1007/JHEP10(2013)121}{{\em JHEP} {\bfseries 10}
  (2013) 121}, \href{http://arxiv.org/abs/1104.2853}{{\ttfamily arXiv:1104.2853
  [hep-th]}}.

\bibitem{Kim:2017zis}
H.~Kim, S.~Kim, and J.~Park, ``{2D Seiberg-like dualities for orthogonal gauge
  groups},'' \href{http://dx.doi.org/10.1007/JHEP10(2019)079}{{\em JHEP}
  {\bfseries 10} (2019) 079}, \href{http://arxiv.org/abs/1710.06069}{{\ttfamily
  arXiv:1710.06069 [hep-th]}}.

\end{thebibliography}\endgroup

\end{document}